\newtheoremstyle{thmstyle}
  {.5cm} 
  {.5cm} 
  {\it} 
  {} 
  {\bfseries} 
  {.} 
  {.5em} 
  {} 
\theoremstyle{thmstyle}
\newtheorem{theorem}{Theorem}[section]
\newtheorem{lemma}[theorem]{Lemma}
\newtheorem{definition}[theorem]{Definition}
\newcommand{\norm}[1]{\left\lVert#1\right\rVert}
\newcommand\Tr{\mathrm{Tr}}
\newcommand\rhobetatree{\rho_{\beta,\mathrm{tree}}}
\newcommand\rhowtree{\rho_{w,\mathrm{tree}}}
\newcommand\rhobetaij{\mathrm{Tr}_{G\backslash T_{\langle i,j\rangle}}\left(\rho_{\beta}\right)}
\DeclareMathOperator*{\Ex}{\mathbb{E}}
\DeclareMathOperator*{\argmin}{arg\,min}
\DeclareMathOperator*{\argmax}{arg\,max}
\renewcommand{\appendixtocname}{List of appendices}
\let\oldappendix\appendices
\g@addto@macro\tableofcontents{%
  \let\tf@toc@orig\tf@toc
}
\renewcommand{\appendices}{%
  \renewcommand{\thesection}{\Roman{section}}
  \let\tf@toc\tf@app
  \addtocontents{app}{\protect\setcounter{tocdepth}{1}}
  \immediate\write\@auxout{%
    \string\let\string\tf@toc\string\tf@app
  }
  \oldappendix
}%
\g@addto@macro\endappendices{%
  \let\tf@toc\tf@toc@orig
  \immediate\write\@auxout{%
    \string\let\string\tf@toc\string\tf@toc@orig
  }%
}  
\newcommand{\listofappendices}{%
  \begingroup
  \renewcommand{\contentsname}{\appendixtocname}
  \let\@oldstarttoc\@starttoc
  \def\@starttoc##1{\@oldstarttoc{app}}
  \tableofcontents
  \endgroup
}
\date{\vspace{-1cm}}
\title{The QAOA gets stuck starting from a good classical string}
\author{Madelyn Cain$^1$, Edward Farhi$^{2,3}$, Sam Gutmann, Daniel Ranard$^3$, Eugene Tang$^3$\\
${}^1$\it\small{Department of Physics, Harvard University, Cambridge, MA 02138}\\
${}^2$\it\small{Google Quantum AI, Venice, CA 90291}\\
${}^3$\it\small{Center for Theoretical Physics, Massachusetts Institute of Technology, Cambridge, MA 02139}}
\begin{document}

\maketitle

\vspace{0.5 cm}

\begin{abstract}
    The Quantum Approximate Optimization Algorithm (QAOA) is designed to maximize a cost function over bit strings. While the initial state is traditionally a  uniform superposition over all strings, it is natural to try expediting the QAOA: first use a classical algorithm to produce some good string, and then run the standard QAOA starting in the computational basis state associated with that string.  Here we report numerical experiments that show this method of initializing the QAOA fails dramatically, exhibiting little to no improvement of the cost function.  
    We provide multiple analytical arguments for this lack of improvement, each of which can be made rigorous under different regimes or assumptions, including at nearly linear depths. 
    We emphasize that our negative results only apply to our simple incarnation of the warm-start QAOA and may not apply to other approaches in the literature. 
    We hope that our theoretical analysis will inform future algorithm design. 
\end{abstract}
\vspace{-0.2 cm}
\textit{}

\thispagestyle{title}

\section{Introduction}\label{sec:intro}

The Quantum Approximate Optimization Algorithm (QAOA)~\cite{QAOA_2014} is designed to find good approximate solutions to the problem of maximizing a cost function defined on bit strings.  The standard QAOA starts in the uniform superposition of all classical bit string and applies cost function-dependent unitaries, with the goal of improving the cost function beyond its expected value in the initial state. It is natural to imagine running a classical algorithm to produce a good string then initializing the QAOA in the associated computational basis state
and looking for improvement.  Variations of this idea have been explored in existing literature under the name of \emph{warm-start QAOA}.

Here we seek to theoretically understand a counter-intuitive  phenomenon: in extensive numerical experiments involving small and large instances at low and high depths, the QAOA makes effectively \textit{zero} progress when starting from a single warm start string. That is, nearly every good classical string appears to be a local maximum that cannot be improved by \textit{any} QAOA circuit at any numerically accessible depth.
The consistency of our numerics suggests there is an underlying theoretical reason why the QAOA with a warm start gets stuck. In this work, we provide several analytic arguments to explain this phenomenon. Each argument has a different flavor and can be made rigorous in different regimes or under different assumptions.  Section \ref{sec:small_angles} presents an argument for small angles and low depths, Section \ref{sec:thermal} presents an argument for constant depths that rests on a thermality assumption, and Section \ref{sec:compression} presents an argument for sub-linear depths that rests on an assumption about how the number of strings decreases at higher cost.

We emphasize that our results apply to the case where the QAOA is initialized in a single classical string and the unitary operators comprising the QAOA do not depend explicitly on the initial string; they do not necessarily apply to  possibly more viable approaches to the warm-start QAOA in the literature~\cite{Tate_2020, Egger_2021}.

\section{Standard and single-string warm-start QAOA}\label{sec:definitions}

\subsection{A quick review of the standard QAOA}
The standard QAOA starts in the uniform superposition of all $n$-bit strings $\ket{s} = \frac{1}{2^{n/2}}\sum_{z}\ket{z}$, and applies a sequence of $2p$ unitaries depending on the $2p$ parameters $\boldsymbol{\gamma}$ and $\boldsymbol{\beta}$, producing
\begin{equation}
    \ket{\boldsymbol{\gamma}, \boldsymbol{\beta}} = e^{-i\beta_p B} e^{-i\gamma_p C}\dots e^{-i\beta_1 B}e^{-i\gamma_1 C} \ket{s},\label{eq:standard_QAOA_unitary}
\end{equation}
where $B = \sum_i X_i$ is the sum of the single-qubit $X$ operators, and $C$ is the diagonal cost function operator.  The goal is then to find the parameters $\boldsymbol{\gamma}, \boldsymbol{\beta}$ to make $\bra{\boldsymbol{\gamma}, \boldsymbol{\beta}} C\ket{\boldsymbol{\gamma}, \boldsymbol{\beta}}$ as large as possible. We focus on graph-based problems such as MaxCut,  where the cost function can be written as 
\begin{equation} \label{eq:cost_function}
    C(z) = \sum_{\langle i, j\rangle\in E}C_{\langle i, j\rangle}(z_i, z_j),
\end{equation}
where $\langle i, j \rangle$ ranges over all edges of a graph $G = (V, E)$. For example, the MaxCut cost function is 
\begin{align}
C_\mathrm{MC}=\frac{1}{2}\sum_{\langle i,j\rangle \in E}(1-Z_i Z_j),
\end{align}
which counts the number of cut edges. The cut fraction is defined as the number of cut edges divided by the total number of edges. 
It will often be convenient to consider the simpler MaxCut cost function operator,
\begin{align}
    C_Z = -\sum_{\langle i,j\rangle \in E}Z_i Z_j.
\end{align}

\subsection{The QAOA starting from a good classical string}
In this paper, we focus on the QAOA starting at a computational basis state $\ket{w}$, where $w$ is a good string, with operators in the same form as the previous section. Note that when we start in $\ket{w}$, the first unitary $e^{-i\gamma_1 C}$ in equation~\eqref{eq:standard_QAOA_unitary} only introduces a phase, so effectively the state becomes
\begin{equation} 
    \ket{\boldsymbol{\gamma}, \boldsymbol{\beta}, w} = U(\boldsymbol{\gamma},\boldsymbol{\beta})|w\rangle = e^{-i\beta_k B} e^{-i\gamma_{k-1} C}\dots e^{-i\gamma_1 C}e^{-i\beta_1 B} \ket{w},\label{eq:warm_start_QAOA_unitary}
\end{equation}
which depend on the $2k-1$ parameters $\beta_1,\dots, \beta_k$ and $\gamma_1,\dots, \gamma_{k-1}$. In keeping with the convention of the standard QAOA, where there are $2p$  parameters, we define the QAOA starting on a classical string with a fractional value of $p$ so that $2p = 2k-1$. 

Given $w$, the goal is to maximize $\bra{\boldsymbol{\gamma}, \boldsymbol{\beta}, w} C\ket{\boldsymbol{\gamma}, \boldsymbol{\beta}, w}$ by finding the optimal parameters
\begin{align}
    (\boldsymbol{\gamma}_w,\boldsymbol{\beta}_w) = \argmax_{(\boldsymbol{\gamma},\boldsymbol{\beta})}\,\langle w|U^\dagger(\boldsymbol{\gamma},\boldsymbol{\beta})CU(\boldsymbol{\gamma},\boldsymbol{\beta})|w\rangle.
\end{align}
We then define  the optimal QAOA unitary for the string $w$,
\begin{align} \label{eq:U_w}
    U_w = U(\boldsymbol{\gamma}_w,\boldsymbol{\beta}_w),
\end{align}
which depends on $w$ only through the optimal parameters $(\boldsymbol{\gamma}_w,\boldsymbol{\beta}_w)$.

\subsection{Cost function as a sum over neighborhoods} \label{subsec:nbhd_sums}

We now review how the expected cost function may be organized as a sum over subgraphs, as in~\cite{QAOA_2014}.  This is instructive for analyzing large graphs at constant $p$ in  Sections~\ref{subsec:large_num} and~\ref{sec:thermal}. We consider the warm-start QAOA with initial classical string $w$ and expected cost $\bra{w} U^\dagger C U \ket{w}$, where $U = U(\boldsymbol{\gamma},\boldsymbol{\beta})$ is the QAOA unitary.
In this regime, the operator $U^\dagger C_{\langle i,j\rangle}U$ only acts non-trivially on the ``edge neighborhood'' of  $\langle i,j\rangle$: vertices within graph distance $r$ of $\langle i,j\rangle$, where the radius $r=p-\frac{1}{2}$ is the number of applications of $e^{-i \gamma C}$ in the QAOA circuit. Because $U^\dagger C_{\langle i,j\rangle}U$ only acts non-trivially on the edge neighborhood of $\langle i, j\rangle$, we can write
\begin{align}
\langle w|U^\dagger C U|w\rangle &=\sum_{\langle i,j\rangle \in E}\langle w|U^\dagger C_{\langle i,j\rangle}U|w \rangle. \\
&=\sum_{\langle i,j\rangle \in E}\langle w_{\langle i,j\rangle}|U^\dagger C_{\langle i,j\rangle}U|w_{\langle i,j\rangle}\rangle. 
\end{align}
where $w_{\langle i,j\rangle}$ refer to the subset of bits in the corresponding edge neighborhood. 
For constant $p$ on a bounded-degree graph, this sum can be handled with low computational cost: each term can be computed using only the Hilbert space of the qubits in the edge neighborhood and the values $w_{\langle i, j\rangle}$. 

\section{Numerical experiments}\label{sec:num}

We first present numerical experiments for MaxCut that consistently show that running the standard QAOA starting from a good string does not lead to improvement, both for small graphs at high depth and also for larger graphs where we can only explore at  lower depth. The results we present are typical; we did numerical experiments for various graphs of different sizes and at many QAOA depths, and never saw appreciable improvements. We replicate the small system size results in Appendix~\ref{sec:MIS} for the Maximum Independent Set problem at $16$ vertices. In Appendix~\ref{sec:SK} we also look at the Sherrington-Kirkpatrick~(SK) model, which is fully connected and has random couplings, and obtain similar results at $14$ vertices, albeit with rare exceptions discussed in Appendix~\ref{sec:magic_angle}.

\subsection{Small system sizes and high depth\label{subsec:small_num}}

In Table~\ref{tab:warm_start_maxcut_12} we present data for  MaxCut on a 12 vertex, 3-regular graph at depths up to $p=9/2$, where each edge neighborhood wraps around the graph many times. For this instance, the largest cut value is 16 and the mean cut value is 9. 
For each warm start we optimize the QAOA using a random initial guess for parameters, and repeat the maximization for 40 random initial guesses.  We first optimize all strings with a cut value of 11, and find that most strings do not improve at all, and the rest have small improvement. 
We then look at all classical strings with a better cut value of 13.  Here even fewer strings show improvement, and again the improvements are small.  
We find that the QAOA never improves when initialized in  near-optimal strings with a cut value of 15.  We expect that the failure to find improvement at very low bit number will continue at high bit number, which we verify next. In Table~\ref{tab:standard_maxcut_12} we show how the usual QAOA which starts in the uniform superpositon makes steady progress to the optimal value as the depth increases.

\begin{table}[t!]
\begin{center}
    \begin{tabular}{r|cccc}
    \multicolumn{5}{c}{\textsc{Warm start at $C_{\text{MC}}(w) = 11$ with 516 classical strings $w$}}\\
    \hline
        $p$  & $3/2$ & $5/2$ & $7/2$ & $9/2$ \\
        Number of strings improved & 56 & 72 & 112 & 164 \\
        Mean cost of improved strings & 11.03 & 11.05 & 11.09 & 11.10\\
        Largest cost of improved strings & 11.04 & 11.18 & 11.29 & 11.42 \\
    \multicolumn{5}{c}{\rule{0pt}{2ex}} \\
    \multicolumn{5}{c}{\textsc{Warm start at $C_{\text{MC}}(w)= 13$ with 126 classical strings $w$}}\\
    \hline
        $p$  & $3/2$ & $5/2$ & $7/2$ & $9/2$ \\
        Number of strings improved & 2 & 2& 2 & 6 \\
        Mean cost of improved strings & 13.05 & 13.12 & 13.14 & 13.06 \\
        Largest cost of improved strings & 13.05 & 13.12 & 13.14 & 13.15 \\ 
    \multicolumn{5}{c}{\rule{0pt}{2ex}} \\
    \multicolumn{5}{c}{\textsc{Warm start at $C_{\text{MC}}(w)= 15$ with 10 classical strings $w$}}\\
    \hline
        $p$  & $3/2$ & $5/2$ & $7/2$ & $9/2$ \\
        Number of strings improved & 0 & 0 & 0 & 0 \\

    \end{tabular}
    \caption{\textit{MaxCut. QAOA improvement from good initial classical strings on a 12 vertex, 3-regular graph with a maximum cut size of $16$. The average cut size is $9$ and the number of parameters is $2p$. When warm starting at cut size $11$, the number of strings which improve  increases steadily with depth, although all improvements are small. All improvements come in pairs because of the bit flip symmetry of MaxCut.  At cut size $15$ we see zero improvements.}\label{tab:warm_start_maxcut_12}}
    \begin{tabular}{r|cccccc}
    \multicolumn{7}{c}{\rule{0pt}{2ex}} \\
    \multicolumn{7}{c}{\textsc{Standard QAOA}}\\
    \hline
        $p$  & $1$ & $2$ & $3$ & $4$ & $5$ & $6$ \\
        Expected cut size & 12.15 & 13.43 & 14.28 & 14.86 & 15.20 & 15.41 \\
        
    \end{tabular}
    \caption{\textit{MaxCut. Performance of the standard QAOA for the same instance as above, starting in the uniform superposition where the expected value of $C_{\text{MC}}$ is $9$. The number of parameters is $2p$. Note the steady improvement with depth.}\label{tab:standard_maxcut_12}}
    
\end{center}
\end{table}

\subsection{Large system sizes and low depths\label{subsec:large_num}}
We now look at MaxCut on large $3$-regular graphs. We toss a representative instance with \mbox{$n=300$} vertices and a largest cut size of $415$. We prepare good classical strings using two classical algorithms: simulated annealing 
and Goemans-Williamson~\cite{GW_algorithm}. 
Our simulated annealing algorithm samples strings from a thermal distribution using a Metropolis-Hastings cluster update rule $1000n$ times. For temperatures of $1.25$ and $1.75$,
we generate 1000 good classical strings with average cut size $310.6$ and $287.8$, respectively. We also produce 1000 classical strings with an average cut value of $398.3$ using the Goemans-Williamson algorithm. We then simulate the QAOA at $p=3/2$ and $p=5/2$ using the edge neighborhoods trick in  Section \ref{subsec:nbhd_sums}, and see zero improvement on any of the initial strings generated by either algorithm. Our results suggest that the fraction of strings which improve goes to zero as the system size grows.

\section{Failure on a Decoupled
 Problem \label{sec:decoupled_problem}}

A simple intuition for why starting from a good string might fail is as follows: consider the cost function $C_{\langle i, j\rangle}$ on edge $\langle i, j\rangle$ for MaxCut.  For a given string $w$, the clause is either satisfied or not, so we can suggestively write the expectation of $C$ after applying the QAOA unitary $U$ as
\begin{align}
    \bra{w} U^\dagger C_{\langle i, j\rangle}U\ket{w} = 1-\varepsilon_{\langle i, j\rangle }
\end{align}
 in the satisfied case, and
\begin{align}
    \bra{w} U^\dagger C_{\langle i, j\rangle}U\ket{w} = \delta_{\langle i, j\rangle }
\end{align}
in the unsatisfied case (note that if $U=1$ then $\varepsilon_{\langle i, j\rangle }$ and $\delta_{\langle i, j\rangle }$ are zero). Now imagine that the average of $\varepsilon_{\langle i, j\rangle }$ over the satisfied edges is equal to the average of $\delta_{\langle i, j\rangle }$ over the unsatisfied edges.  In a good string, there are more satisfied edges than unsatisfied, so we have
\begin{align}
    \bra{w} U^\dagger C U\ket{w} \le \bra{w}C\ket{w},
\end{align}
where $C$ is the total cost over all edges.
Therefore, no improvement would be possible. 

We now give a simple example where exactly this happens. Consider a graph of $2m$ vertices consisting of $m$ pairs of vertices, each pair connected by an edge. Focus on the first copy which has edge $\langle 1, 2\rangle$.  At any depth, the QAOA consists of applications of the unitaries that are exponentials of phases times $X_1+X_2$ or $Z_1Z_2$. Note that these operators commute with the swap operator on 1 and 2 as well as the operator $X_1 X_2$. The eigenstates and eigenvalues of these two operators are given in Table~\ref{tab:disconnected}.
\begin{table}[h!]
\begin{center}
    \begin{tabular}{c|cc}
    \hline
        Eigenstate  & Eigenvalue of swap & Eigenvalue of $X_1X_2$\\
        $\ket{a} = \frac{1}{\sqrt{2}}(\ket{+1,+1} - \ket{-1,-1})$& $+1$ & $-1$ \\
        $\ket{b} = \frac{1}{\sqrt{2}}(\ket{+1,-1} - \ket{-1,+1})$ & $-1$ & $-1$ \\
        $\ket{c} = \frac{1}{\sqrt{2}}(\ket{+1,+1} + \ket{-1,-1})$& $+1$ & $+1$  \\
        $\ket{d} = \frac{1}{\sqrt{2}}(\ket{+1,-1} + \ket{-1,+1})$& $+1$ &  $+1$ 
    \end{tabular}
    \caption{\textit{ Eigenstates and eigenvalues of the swap and $X_1X_2$ operators.}\label{tab:disconnected}}
\end{center}
\end{table}

Note that $\ket{a}, \ket{b}$ have unique quantum numbers so under any QAOA unitary they can each only pick up a phase, but $\ket{c}, \ket{d}$ can mix as (up to an overall phase)
 \begin{align}
     U\ket{c} &= \cos\theta \ket{c} +e^{i\phi} \sin\theta \ket{d},\\
     U\ket{d} &= e^{-i\phi}\sin\theta \ket{c} - \cos\theta \ket{d}.
 \end{align}
Then the QAOA initialized in the unsatisfied string $\ket{+1,+1}=  \frac{1}{\sqrt{2}}(\ket{c} + \ket{a}) $ evolves as
\begin{align}
    U\ket{+1,+1} = \frac{1}{\sqrt{2}}( \cos\theta \ket{c} + e^{i\phi} \sin\theta\ket{d} + e^{i\eta_1}\ket{a} )
\end{align}
where $\eta_1$ is another phase.
Therefore, the expected value of $C$ goes from 0 to $ \frac{1}{2}\sin^2\theta$.  Similarly, initializing the QAOA in the satisfied string $\ket{+1,-1}$, we find that the expected value of $C$ goes from 1 to $1-\frac{1}{2}\sin^2\theta$.
So for the $m$-edge problem starting in any string $w$, every unsatisfied clause is improved by $\frac{1}{2}\sin^2\theta $ and every satisfied edge degrades by the same amount.  Therefore, no matter  the depth, the QAOA starting in a good string cannot improve upon its initial cost.
 
We contrast this with how the usual QAOA performs.  Starting in the uniform superposition at $p=1$ with $\gamma=\pi/2$ and $\beta=\pi/8$, we go to the perfect cut with probability $1$. Note that for the standard QAOA, there is an asymptotic performance guarantee in that the maximum expectation value is guaranteed to reach the optimum as $p\rightarrow \infty$~\cite{QAOA_2014}. This example is notable in showing that no such guarantee is present for the case of warm-start QAOA.  

\section{Small angles and low depth}\label{sec:small_angles}


When improvements are seen in the numerical experiments, they are always small. One explanation for this could be that the optimal $\beta$'s, which drive the mixing, are also small. Therefore, in this section we examine when improvements are possible for small $\beta$'s. First we show that for MaxCut at $p=1/2$, no improvement is possible for any $\beta_1$. Next we give two necessary and sufficient conditions for improvement at $p=3/2$ when $\beta_1, \beta_2$ are small.  We find that in our MaxCut numerical experiments at $p=3/2$, improvement was seen if and only if one of our conditions were satisfied. Consistent with our intuition, improvement is possible only if it is possible for small $\beta$'s. 

\subsection{Quantum walk starting on a classical string ($p=1/2$)}
The expected cost for $p=1/2$ after evolving under $B$ is given by
\begin{align}
    \bra{w}e^{i\beta_1 B}C_Ze^{-i\beta_1 B}\ket{w} &= -\sum_{\langle i, j \rangle}\bra{w} e^{i\beta_1 B} Z_i Z_j  e^{-i\beta_1 B}\ket{w} \nonumber \\
    &=-\sum_{\langle i, j \rangle}\bra{w} (\cos(2\beta_1)Z_i +\sin(2\beta_1) Y_i )( \cos(2\beta_1)Z_j +\sin(2\beta_1) Y_j )  \ket{w}\nonumber \\
    &=\cos^2(2\beta_1) C_Z(w)\label{eq:maxcut_one_parameter}.
\end{align}
Therefore, if $w$ is a good starting string, meaning that $C_Z (w) > 0$,
then $\beta_1=0$ is optimal and the optimized expected cost remains $C_Z(w)$. 

\subsection{Warm starting at $p=3/2$ for MaxCut on $3$-regular graphs}

We now look at the $p=3/2$ QAOA warm starting in $|w\rangle$ with both $\beta$'s  small but $\gamma$ arbitrary. We focus on MaxCut for 3-regular graphs. We state a necessary and sufficient condition for improvement here, and give the detailed derivation  in Appendix~\ref{sec:small_angles_proofs}.

First, let us define the quantity
\begin{align}
    \delta_i = (C_i - C(w))/2\label{eq:definition_of_delta_i},
\end{align}
where $C_i = \bra{w} X_i C X_i \ket{w}$ measures the cost of $w$ after flipping bit $i$. For MaxCut on 3-regular graphs, $|\delta_i| = $ 1 or 3.  At $p=3/2$, we find that improvement of the cost is possible for small $\beta$'s if and only if
\begin{align}
 \sum_{|\delta_i|=1}\delta_i >0\qquad\text{or}\qquad \sum_i\delta_i^3 >0. \label{eq:sam_iff}
\end{align}

If $w$ is a local maximum with respect to bit flips, then all of the $\delta_i$ are negative and neither condition can  be met. We expect that for most good strings, neither of these two conditions can be met. We make use of the identity
\begin{align}
    \sum_{i}\delta_i = -2C_Z(w),\label{eq:delta_sum2}
\end{align}
which is straightforward to establish. Because $C_Z(w) > 0$ for a good starting string, the $\delta_i$ are usually negative, meaning that both conditions~\eqref{eq:sam_iff} are unlikely.  As the goodness of the starting string increases there will be fewer positive $\delta_i$, so the conditions are even less likely to be met. This is consistent with the numerics on small MaxCut instances at $p=3/2$ in Section~\ref{subsec:small_num} where there are fewer improvements on the better strings. Further, we expect that as $n$ increases, fluctuations that allow the conditions to be met will become even more rare. This is consistent with the findings in Section~\ref{subsec:large_num} on large MaxCut instances, where the condition is never met and improvement is never seen for thousands of initial strings. In all of our numerics in Section~\ref{sec:num}, improvement was seen at $p=3/2$ if and only if equation~\eqref{eq:sam_iff} was satisfied.

\section{Thermal argument for getting stuck\label{sec:thermal}}

In this section we give two upper bounds for the progress made by the single-string warm-start QAOA at constant depth on large bounded-degree graphs. The first bound shows that strings chosen uniformly at random are likely to get stuck.  The second bound instead considers strings chosen from a finite-temperature thermal ensemble, and the argument uses the
principle of minimum energy from thermodynamics in equation~\eqref{eq:min_energy}. 


\subsection{Starting from a uniformly random string \label{sec:thermal_1}}

We first consider the case of starting the QAOA with a classical bit string chosen uniformly at random. Our results apply to generic cost functions for general bounded-degree graphs,
\begin{equation} 
    C(z) = \sum_{\langle i, j\rangle\in E}C_{\langle i, j\rangle}(z_i, z_j),\label{eq:generic_cost_thermal}
\end{equation}
with $m$ edges at any constant depth. We define the average value per edge of the cost function, 
\begin{align}
    \overline{c} = \frac{1}{2^n}\sum_{z\in\{-1,1\}^n}\frac{1}{m} C(z).\label{eq:avg_cost_per_edge}
\end{align}
Using the fact that $C$ is a sum of terms, most of which are uncorrelated because of the bounded degree, one can show that for any $\epsilon>0$, and $w$ chosen uniformly at random,
 \begin{align} 
    \Pr_w\left(\left|\frac{1}{m} C(w) - \bar{c} \right| \ge \frac{1}{m^{1/2 - \varepsilon}}\right) \longrightarrow 0,\quad\text{as }m\rightarrow\infty. \label{eq:controverial_equation}
\end{align}
This implies that in the limit of large graph size $m$, almost all strings have cost function values extremely close to the average value $\overline{c}$.

In Appendix~\ref{subsec:average_string_failure} we prove the following theorem, which states that typical strings cannot be improved by constant depth QAOA.
\begin{theorem}\label{thm:no_improvement_typical}
For any $\epsilon>0$, and $w$ chosen uniformly at random,
\begin{align}
    \Pr_w\left(\frac{1}{m}\langle w|U_w^\dagger CU_w |w\rangle - \frac{1}{m}C(w) \ge \frac{1}{m^{1/2-\varepsilon}}\right) \rightarrow 0,\qquad\text{as }m\rightarrow\infty,\label{eq:chi_zero}
\end{align}
where $U_w$ is the optimal constant-depth QAOA unitary for each $w$.
\end{theorem}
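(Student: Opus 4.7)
The plan is to combine a McDiarmid-type concentration inequality for a \emph{fixed} QAOA unitary with an $\varepsilon$-net over the parameter space, and then compare against the concentration of $C(w)$ itself that is already promised by equation~\eqref{eq:controverial_equation}.

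For any fixed constant-depth QAOA unitary $U = U(\boldsymbol{\gamma},\boldsymbol{\beta})$, a one-line computation supplies the starting point:
\begin{align}
\Ex_w\bigl[\langle w|U^\dagger CU|w\rangle\bigr] = \frac{1}{2^n}\Tr\!\bigl(U^\dagger CU\bigr) = \frac{1}{2^n}\Tr(C) = m\overline{c}.
\end{align}
So on average over $w$, no fixed unitary produces any improvement. By the neighborhood decomposition of Section~\ref{subsec:nbhd_sums},
\begin{align}
\langle w|U^\dagger CU|w\rangle = \sum_{\langle i,j\rangle\in E} f_{ij}^{(U)}\bigl(w_{\langle i,j\rangle}\bigr),
\end{align}
where each $f_{ij}^{(U)}$ is $O(1)$-bounded and depends only on the $O(1)$ bits in the edge neighborhood of $\langle i,j\rangle$. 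On a bounded-degree graph at constant depth, each bit of $w$ lies in only $O(1)$ such neighborhoods, so a single bit-flip perturbs the sum by $O(1)$. McDiarmid's inequality then yields
\begin{align}
\Pr_w\!\left[\,\bigl|\langle w|U^\dagger CU|w\rangle - m\overline{c}\bigr| \ge m^{1/2+\varepsilon/3}\,\right] \le 2\exp\!\bigl(-\Omega(m^{2\varepsilon/3})\bigr),
\end{align}
and the same bounded-difference argument applied directly to $C(w)$ recovers the concentration in equation~\eqref{eq:controverial_equation}.

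The second step is to promote the fixed-$U$ bound to one that is uniform in $U$ via an $\varepsilon$-net on parameter space. Because each per-edge expectation $\langle w|U^\dagger C_{\langle i,j\rangle}U|w\rangle$ is $O(1)$-Lipschitz in the $2k-1$ QAOA parameters (bounded operator norms acting on an $O(1)$-size neighborhood), the total $\langle w|U^\dagger CU|w\rangle$ is $O(m)$-Lipschitz. Restricting to a bounded fundamental domain in parameter space, an $m^{-c}$-net has only $\mathrm{poly}(m)$ points; a union bound together with the Lipschitz estimate then shows that with probability $1-o(1)$, \emph{every} QAOA unitary $U$ in the family simultaneously satisfies $\langle w|U^\dagger CU|w\rangle \le m\overline{c} + m^{1/2+\varepsilon/2}$. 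In particular this applies to the $w$-dependent maximizer $U_w$. Combining with the lower bound $C(w) \ge m\overline{c} - m^{1/2+\varepsilon/2}$ from equation~\eqref{eq:controverial_equation} and dividing by $m$ yields the theorem.

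The main obstacle is the compactification step: while the mixer parameters are manifestly periodic in $\beta$ (since $B$ has integer spectrum), for cost functions with irrational local spectrum the phase-separator parameters $\gamma_\ell$ are not strictly periodic, and one must justify that the supremum of $\langle w|U^\dagger CU|w\rangle$ over the full QAOA family is effectively attained in a fixed bounded box. This should follow from the fact that at constant depth each per-edge expectation is a finite trigonometric sum with $O(1)$ distinct bounded frequencies, so the relevant dynamics on parameter space can be confined without loss. The remaining pieces — the McDiarmid concentration and the union-bound-over-the-net argument — are then routine.
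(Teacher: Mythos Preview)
Your argument is correct (modulo the compactness caveat you flag, which is a genuine loose end for non-integer-valued cost functions), but it is a genuinely different route from the paper's.

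The paper never takes a supremum over $U$ and therefore never needs a net or a compactness argument. Instead, it rewrites the per-edge cost as a sum over the $2^{2d}$ possible local patterns $\ell$, weighted by the empirical counts $\#_w(\ell)$:
\[
\frac{1}{m}\langle w|U_w^\dagger C U_w|w\rangle \;=\; \bar c \;+\; \frac{1}{m}\sum_{\ell}\Bigl(\#_w(\ell)-\tfrac{m}{2^{2d}}\Bigr)\,\langle \ell|U_w^\dagger C_{\mathrm{edge}}U_w|\ell\rangle.
\]
A single Cauchy--Schwarz step then \emph{decouples} the $U_w$-dependent factor (bounded by the constant $2^{2d}$ for \emph{any} unitary, since $|C_{\mathrm{edge}}|\le 1$) from the purely $w$-dependent factor $\sum_\ell(\#_w(\ell)-m/2^{2d})^2$, whose expectation is $O(m)$ by a second-moment computation on weakly correlated indicators. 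Chebyshev finishes the job. The optimization over $U_w$ simply disappears.

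What each approach buys: the paper's Cauchy--Schwarz trick is shorter and cleanly sidesteps both the net and the parameter-compactness issue you identify; it also only needs a second-moment bound, not McDiarmid. Your approach is more modular and gives stronger (sub-Gaussian) concentration for each fixed $U$, which is exactly what you need to survive the union bound, and it would generalize more readily to settings where the local-pattern-counting rewrite is unavailable. But in the present setting the paper's decoupling is the more economical move.
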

Therefore, we have that for nearly all initial strings, the constant-depth QAOA can only improve the cost function (per edge) by an amount that vanishes in the limit of large graphs. While this result does not apply to warm starts, whose starting cost value per edge $C(w)/m$ is substantially larger than $\bar{c}$, it is very surprising that even starting from a mediocre string, the QAOA fails to make progress. 

Equation~\eqref{eq:controverial_equation} implies that warm starts form a vanishingly small fraction of all strings. Next, we give results on these rare warm-start strings. 

\subsection{Starting from a typical good string \label{sec:thermal_2}} 

We work in the setting of $d$-regular graphs with a cost function of the form of equation~\eqref{eq:generic_cost_thermal}.
For typical large random  $d$-regular graphs, almost all of the local neighborhoods are trees. Fixing a neighborhood radius $r$ (corresponding to the QAOA with $p=r+\frac{1}{2}$, see Section~\ref{subsec:nbhd_sums}), we define the fraction $\delta$ of edges whose neighborhoods are \textit{not} trees as
\begin{align}  
\delta = 1-\frac{|E_T|}{m} ,
\end{align}
where $E_T$ denotes the set of edges whose neighborhood of radius $r$ is a tree. Our argument will be useful in the case that $\delta$ is small. For random $d$-regular graphs with $n \to \infty$ with $d$ and $r$ fixed,  $\delta = O\left(1/n\right)$~(\cite{janson_random_regular}, Theorem 9.5) is vanishingly small.

We introduce two ensembles associated with the warm-start string $w$. For any classical string \mbox{$w\in \{-1,1\}^n$} and a mostly locally tree-like $G$ we define its \emph{local ensemble} by
\begin{align}
    \rhowtree = \frac{1}{|E_T|}\sum_{\langle i,j\rangle \in E_T}|w_{\langle i,j\rangle}\rangle\langle w_{\langle i,j\rangle}|,
\end{align}
where $w_{\langle i,j\rangle}$ denotes the restriction of $w$ onto the local tree neighborhood $T_{\langle i,j\rangle}$, and  $T_{\langle i,j\rangle}$ denotes the tree neighborhood centered around edge $\langle i,j\rangle \in E_T$. Each tree neighborhood has a fixed number of vertices, so $w_{\langle i,j \rangle}$ is a string of that length, and $\rhowtree$ is a density matrix over the corresponding number of qubits.

Let
\begin{align} \label{eq:rhobeta_defn}
\rho_{\beta} = \frac{1}{\mathrm{Tr}\big(e^{-\beta(- C)}\big)}e^{-\beta(- C)}
\end{align}
be the global thermal density matrix with respect to the Hamiltonian $H=-C$, at inverse temperature $\beta$ chosen so that  $
\mathrm{Tr}(C\rho_\beta) = C(w) $. Here we introduce a minus sign because in thermodynamics we associate low temperatures with low energies, whereas for us good strings have high $C$.

We will consider the reduced density matrix of $\rho_\beta$ onto the tree neighborhood $T_{\langle i,j\rangle}$, denoted $\rhobetaij$.
Because all the tree neighborhoods $T_{\langle i,j\rangle}$ are isomorphic subgraphs, we can identify their associated Hilbert spaces. We consider each reduced operator $\rhobetaij$ to act on the same abstract Hilbert space (the same Hilbert space implicit in the definition of $\rhowtree$). With this identification, we can define the average over neighborhoods,
\begin{align} \label{eq:rhobetatree}
    \rhobetatree = \frac{1}{|E_T|}\sum_{\langle i,j\rangle \in E_T} \rhobetaij.
\end{align}
Again, both $\rhowtree$ and $\rhobetatree$ act on the Hilbert space of a single tree neighborhood: although their definitions involve the sum of terms associated to different neighborhoods, these neighborhoods are all isomorphic trees, and we implicitly identify the neighborhoods when summing the terms.

We now give our main result, an upper bound on how much the cost function of a warm-start string $w$ can improve under the QAOA at fixed depth. The upper bound depends on a quantity $\varepsilon_w$, which we call the ``thermality coefficient'', that measures the distance between the ensembles $\rhowtree$ and $\rhobetatree$:
\begin{align}
\label{eq:local_therm}
    \varepsilon_w =  \| \rhobetatree- \rhowtree\|_1,
\end{align}
where $\beta$ is the inverse temperature chosen so that 
\begin{align}
\mathrm{Tr}(C\rho_\beta) = C(w)
\end{align} 
and where 
$\|\rho-\sigma\|_1 \equiv \mathrm{Tr}\Big(\sqrt{(\rho-\sigma)^\dagger(\rho-\sigma)}\Big)$
denotes the trace norm. Both $\rhobetatree$ and $\rhowtree$ are diagonal in the computational basis, i.e., they are classical probability distributions over strings, in which case $\varepsilon_w$ is also equal to twice the total variation distance. When $\varepsilon_w$ is small, we refer to $w$ as being ``locally thermal''. Note that $\varepsilon_w$ depends on the QAOA depth $p$ which defines the neighborhood size. Our main result, stated next and proven in Appendix~\ref{subsection:thermal_proof}, shows that if $w$ is locally thermal, i.e., if $\varepsilon_w$ is small, then $w$ can only improve by a small amount. 
\begin{theorem} \label{thm:cbar0}
Consider the single-string warm-start QAOA initialized in a string $w$ with thermality coefficient $\varepsilon_w$, and let $\delta$ be the fraction of neighborhoods whose edges are not trees. Then
    \begin{align} 
     \frac{1}{m}\langle w|U^\dagger_w C U_w|w\rangle  \leq c(w) +2\varepsilon_w + 4\delta ,
\end{align}
where $c(w)= C(w)/m$ is the cut fraction of the original string $w$.
\end{theorem}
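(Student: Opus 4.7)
The plan is to peel off the non-tree-neighborhood edges, exploit the locality of $U_w^\dagger C_{\langle i,j\rangle} U_w$ on tree neighborhoods to rewrite the tree-edge expectation as a single trace against $\rhowtree$, then swap $\rhowtree$ for $\rhobetatree$ at cost $\varepsilon_w$ in trace norm, and finally apply the principle of minimum energy to the global thermal state $\rho_\beta$ to close the bound.

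First I would split
\begin{align*}
\frac{1}{m}\langle w|U_w^\dagger C U_w|w\rangle = \frac{1}{m}\sum_{\langle i,j\rangle\in E_T}\langle w|U_w^\dagger C_{\langle i,j\rangle} U_w|w\rangle + \frac{1}{m}\sum_{\langle i,j\rangle\notin E_T}\langle w|U_w^\dagger C_{\langle i,j\rangle} U_w|w\rangle.
\end{align*}
The non-tree sum has at most $\delta m$ terms, each bounded in magnitude by $\|C_{\langle i,j\rangle}\|_\infty = O(1)$, so it contributes $O(\delta)$. For the tree sum, the light-cone argument recalled in Section~\ref{subsec:nbhd_sums} says that $U_w^\dagger C_{\langle i,j\rangle} U_w$ is supported entirely on $T_{\langle i,j\rangle}$. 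Transporting each such operator to a single abstract copy $(\tilde U_w,\tilde C)$ of the isomorphic tree neighborhood, the averaged tree-edge expectation becomes $\Tr\!\left(\tilde U_w^\dagger \tilde C\, \tilde U_w\, \rhowtree\right)$ by the very definition of $\rhowtree$.

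Next I would replace $\rhowtree$ by $\rhobetatree$, incurring an error at most $\|\tilde C\|_\infty\, \varepsilon_w$ via the standard bound $|\Tr(A(\rho-\sigma))|\le \|A\|_\infty\|\rho-\sigma\|_1$. Crucially, since $\rhobetaij$ is by definition the reduced state of $\rho_\beta$ on $T_{\langle i,j\rangle}$ and the operator is supported there, each local trace $\Tr\!\left(\tilde U_w^\dagger \tilde C\, \tilde U_w\, \rhobetaij\right)$ equals the full-system trace $\Tr(U_w^\dagger C_{\langle i,j\rangle} U_w\, \rho_\beta)$. Averaging over $E_T$ and then re-extending the sum to all edges (again at $O(\delta)$ cost) converts the tree-edge contribution into $\frac{1}{m}\Tr(U_w^\dagger C U_w\, \rho_\beta)$, up to the accumulated errors.

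The single load-bearing inequality is then the principle of minimum energy: because $U_w\rho_\beta U_w^\dagger$ has the same von Neumann entropy as $\rho_\beta$, and the Gibbs state $\rho_\beta \propto e^{\beta C}$ minimizes $\Tr(-C\rho')$ among states of fixed entropy, we have $\Tr(C\, U_w\rho_\beta U_w^\dagger)\le \Tr(C\rho_\beta) = C(w)$, and hence $\frac{1}{m}\Tr(U_w^\dagger C U_w\, \rho_\beta) \le c(w)$. Combining this with the accumulated $O(\varepsilon_w)$ and $O(\delta)$ errors yields the theorem; the explicit constants $2$ and $4$ should drop out of tracking the operator norms $\|C_{\langle i,j\rangle}\|_\infty$, the two places where a $\delta$ error enters, and the $1/(1-\delta)$ factor arising when renormalizing an average over $E_T$ into an average over $E$. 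I expect the main obstacle is not any hard inequality — the minimum-energy step is standard and the locality step is purely structural — but the careful bookkeeping: keeping the isomorphism between the different $T_{\langle i,j\rangle}$ consistent, juggling $|E_T|$ versus $m$ in the denominators, and ensuring each $\delta$-error lands on the right side of the chain so that the final constants come out cleanly.
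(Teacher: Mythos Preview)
Your proposal is correct and follows the same skeleton as the paper's proof: split off the non-tree edges, use locality to pass to $\rhowtree$, swap for $\rhobetatree$ at cost $\varepsilon_w$, lift back to the global $\rho_\beta$, and apply the minimum-energy principle. The one substantive difference is how you close the bound: you invoke the defining relation $\Tr(C\rho_\beta)=C(w)$ directly to get $\frac{1}{m}\Tr(U_w^\dagger C U_w\,\rho_\beta)\le c(w)$, whereas the paper instead runs the same chain of substitutions in reverse---restricting $\Tr(C\rho_\beta)$ to tree edges, passing to $\rhobetatree$, swapping back to $\rhowtree$, and re-extending---thereby picking up a second $\varepsilon_w$ and two additional $\delta$'s. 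Your shortcut is cleaner and in fact yields the sharper bound $c(w)+\varepsilon_w+2\delta$ rather than the stated $c(w)+2\varepsilon_w+4\delta$; the constants $2$ and $4$ you anticipated will not appear, and the $1/(1-\delta)$ renormalization you worried about does not arise because the $|E_T|/m$ prefactor simply cancels when you unpack $\rhobetatree$ back into a sum over $E_T$.
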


Because $\delta$ is small for random graphs, the heart of the result is the dependence on $\varepsilon_w$. If $w$ is locally thermal, then $w$ can only improve by a small amount. In Appendices~\ref{sec:why_small} and~\ref{sec:thermal_num}, we give theoretical and numerical evidence that $\varepsilon_w$ scales as $ 1/\sqrt{m}$ for typical good strings $w$.

\section{Compression argument for getting stuck\label{sec:compression}}
In this section we develop another argument to upper bound the progress made by the QAOA starting from a good string. This result is complementary to the previous results and uses a different assumption. We imagine that the initial good string is chosen uniformly at random from the set of all strings with cost $C_0$. We seek to improve the cost above some target $C_1$. Our argument will produce a useful bound whenever the number of strings $d_0$ at cost  $C_0$ is much larger than the number of strings $d_1$ with cost at least $C_1$.  In particular, if the ratio $d_1/d_0$ is suppressed exponentially in the size of the graph, then with high probability, the randomly chosen initial string cannot be improved beyond $C_1$ using depth $p = O(n^q)$ for $q<1$. 

We refer to this argument as a ``compression'' argument because of the following intuition. Consider a fixed QAOA unitary $U$.  If there are many more strings at cost  $C_0$ than at cost above $C_1$, then due to unitarity, $U$ cannot map all (or even most) of the strings from cost $C_0$ to above $C_1$. That is, the unitary cannot ``compress'' a larger space (of strings at $C_0$) to a smaller space (of strings above $C_1$). This is the core of the argument. Of course, there is more than just one QAOA unitary, so there is no need for a single unitary $U$ to map all the strings at $C_0$ to  above $C_1$.  Instead, one can use different QAOA unitaries $U_w$ that depend on the initial string $w$, hoping that for each $w$ at cost $C_0$ there is \textit{some} $U_w$ such that $U_w |w\rangle$ has cost at least $C_1$.  However, this is impossible when the depth $p$ is not too large, because the space of unitaries that can be generated by the QAOA at depth $p$ is small in the appropriate sense.


Our results prior to this section only characterize the improvement in expected cost, without explicitly bounding the probability of obtaining a good string after measurement.  However, for the case of states produced by constant-depth circuits, one can show the distribution of measured costs is concentrated about the expected cost~\cite{QAOA_2014}. In contrast, here we will consider circuits of super-constant depth, so we must take care with the distinction between the expected cost and the distribution of measured costs.
To this end, we introduce the following notion of \emph{improvable strings}.

\begin{definition}\label{def:improvable}
A string $w$ is $p$-improvable to cost $C_1$, with a selection probability at least $\Delta$, if there exists a depth-$p$ QAOA unitary $U_w$ such that a measurement of $U_w |w\rangle$ in the computational basis has a probability at least $\Delta$ of yielding a string with cost at least $C_1$.
\end{definition}


\begin{theorem}\label{thm:improvable_main}
Consider warm-start QAOA at depth $p$, starting with a string chosen uniformly at random among strings with cost $C_0$. Let $d_0$ be the number of strings at cost $C_0$, and let $d_1$ be the number of strings with cost at least $C_1$.  Then with probability at least $1-\epsilon$, the initial string will not be improvable to cost $C_1$ using selection probability at least $\epsilon$, where
\begin{align}
    \epsilon = \left(\frac{2d_1}{d_0}\right)^{\frac{1}{2p+2}}(16 \pi p n^2)^{\frac{p}{p+1}}.
\end{align}
\end{theorem}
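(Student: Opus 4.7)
The plan is a union bound over a fine discretization of the QAOA parameter space, combined with a direct ``compression'' count at each discrete point. At depth $p = k-\tfrac{1}{2}$ the warm-start QAOA unitary is specified by $2p = 2k-1$ angles $(\gamma_1,\ldots,\gamma_{k-1},\beta_1,\ldots,\beta_k)$, each of which I would restrict to $[0,2\pi)$ using the periodicity of $e^{-i\gamma C}$ and $e^{-i\beta B}$ for integer-valued cost functions such as MaxCut. This $2p$-dimensional box admits an $\eta$-net of size $N = (2\pi/\eta)^{2p}$; denote the corresponding unitaries by $U^{(\alpha)}$.

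The first step is a Lipschitz bound on the selection probability
\begin{equation}
f_w(U) = \|\Pi_{\geq C_1}\, U \, |w\rangle\|^2,
\end{equation}
where $\Pi_{\geq C_1}$ projects onto computational-basis strings of cost at least $C_1$. Using $|a^2-b^2| \leq 2|a-b|$ for $a,b\in[0,1]$, together with the elementary estimate $\|e^{-i\eta H}-I\|_{op}\leq \eta\|H\|_{op}$ and the bounds $\|C\|_{op}\leq n^2$, $\|B\|_{op}\leq n$, I get $|f_w(U)-f_w(U')| \leq 4 p n^2 \eta$ whenever all $2p$ angles differ by at most $\eta$. Choosing $\eta = \Delta/(8pn^2)$ then guarantees that any witness unitary $U_w$ with $f_w(U_w)\geq \Delta$ has some net neighbor $U^{(\alpha)}$ with $f_w(U^{(\alpha)})\geq \Delta/2$, so the effective net size is $N = (16\pi p n^2/\Delta)^{2p}$.

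The second step is the compression count at each net point. Letting $\Pi_{C_0}$ be the projector onto strings of cost exactly $C_0$, unitarity gives
\begin{equation}
\sum_{w:\,C(w)=C_0} f_w(U^{(\alpha)}) = \mathrm{Tr}\bigl(\Pi_{\geq C_1}\, U^{(\alpha)}\, \Pi_{C_0}\, (U^{(\alpha)})^\dagger\bigr) \leq \mathrm{Tr}(\Pi_{\geq C_1}) = d_1,
\end{equation}
so by Markov at most $2d_1/\Delta$ strings $w$ at cost $C_0$ satisfy $f_w(U^{(\alpha)})\geq \Delta/2$. Union-bounding over the net, the total number of strings that are $p$-improvable at selection probability $\geq \Delta$ is at most $2 d_1 N/\Delta = 2d_1 (16\pi pn^2)^{2p}/\Delta^{2p+1}$.

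Setting $\Delta=\epsilon$ and requiring this count to be at most $\epsilon d_0$ yields $\epsilon^{2p+2} \geq (2d_1/d_0)(16\pi p n^2)^{2p}$, which rearranges to the claimed expression for $\epsilon$. The main obstacle is balancing the Lipschitz estimate so that the net does not blow up: the amplitude-to-probability factor of $2$, the gradient contribution from each of the $2p$ angles (bounded by $\|C\|_{op}$ or $\|B\|_{op}$), and the factor-of-$2$ safety margin in Markov all appear multiplicatively inside the $(16\pi pn^2)^{2p}$ factor, and too loose an estimate produces an $\epsilon^{p+2}$ rather than the desired $\epsilon^{2p+2}$ on the right-hand side. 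A secondary subtlety is justifying the bounded parameter range, which is clean for integer-cost problems (where $e^{-i\gamma C}$ is truly $2\pi$-periodic) but would need separate care if one wanted to apply the theorem to cost functions with incommensurate eigenvalues.
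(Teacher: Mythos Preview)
Your proposal is correct and follows essentially the same route as the paper: discretize the $2p$-dimensional QAOA angle space using the operator-norm Lipschitz bound $\|U-U'\|\leq 2p n^2\eta$, then at each net point use the trace bound $\sum_w \langle w|U^\dagger \Pi_{\geq C_1}U|w\rangle\leq \Tr(\Pi_{\geq C_1})=d_1$ to limit the number of improvable strings, and finally union bound and optimize $\Delta$. The only cosmetic difference is that the paper packages the per-ball count into an abstract lemma about nearby projectors and uses an $\epsilon$-net (covering plus packing) rather than your more direct Markov-plus-covering argument; both yield the identical intermediate bound $M\leq \tfrac{2d_1}{\Delta}(16\pi p n^2/\Delta)^{2p}$ and hence the same $\epsilon$.
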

This theorem is most useful when $d_1/d_0$ is exponentially suppressed in $n$, i.e.\, when there are exponentially more strings near the starting cost than above the target cost.  Then $\epsilon$ will be small for $p$ sub-linear in $n$, and we conclude that with high probability, the initial string will not be improvable. 

More concretely, suppose that there are exponentially (in $n$) more strings at cost $C_0$ than there are strings with cost above $C_1$. This is often the case when $C_1-C_0$ scales with $n$, and we provide further justification below. In that case we have 
\begin{align} \label{eq:d1d0_scaling}
    \frac{d_1}{d_0} = O(e^{-a n})
\end{align}
for some constant $a>0$. Consider a depth $p = O(n^q)$ for $q < 1$.  Then Theorem \ref{thm:improvable_main} holds with 
\begin{align}
    \epsilon = O(e^{-b n^{1-q}}),
\end{align}
for some constant $b>0$, decaying faster than any polynomial in $n$.  Thus the probability of improvement is very small at large $n$.

To further justify the scaling of \eqref{eq:d1d0_scaling}, we consider the MaxCut cost function of the decoupled graph in Section~\ref{sec:decoupled_problem}, which has $m=n/2$ pairs of vertices, each pair connected by an edge.  The number of strings as a function of cost (i.e., the density of states) is binomial, with variance $m$. For $C_1-C_0 \propto m$, one obtains \eqref{eq:d1d0_scaling}. More generally, one also expects \eqref{eq:d1d0_scaling} whenever the free energy associated to $C$ is extensive since $S = -\partial F/\partial T$ then implies an extensive entropy. The free energy of MaxCut on $d$-regular graphs exhibits this behavior, as shown (at least above the phase transition) in \cite{coja2022ising}. For other classes of problems with random local terms, also see Proposition 4 of \cite{dalzell2022mind}, where their upper bound on the number of solutions as a function of cost indicates similar exponential scaling.

\section{Conclusion}\label{sec:conclusion}
This work was motivated by  numerical experiments on the single-string warm-start QAOA showing zero improvement for almost all good initial strings from the starting cost value. We presented several theoretical results to explain these consistent observations. 
This behavior stands in stark contrast to the standard QAOA initialized in the uniform superposition over all bit strings, which appears to make consistent progress toward the optimum.



\textbf{Acknowledgments}

M.C. acknowledges support from DOE CSG award fellowship (DESC0020347). D.R. acknowledges support from NTT (Grant AGMT DTD 9/24/20). E.T. acknowledges funding received from DARPA 134371-5113608, and DOD grant award KK2014. M.C. and E.F. thank Boaz Barak, Beatrice Nash, and Leo Zhou for early  discussions of these ideas. D.R. and E.T. thank Aram Harrow for discussion.

\printbibliography

\appendix
\begin{appendices}
\newpage
\setcounter{tocdepth}{1} 
\listofappendices

In these appendices, we provide additional details and prove statements in the main text. In Section~\ref{sec:other_types}, we discuss other types of initial states for the QAOA, including starting from a bad string and starting from a uniform superposition of strings with the same cost. Next, we give evidence that the single-string warm-start QAOA fails on two new problems: the Maximum Independent Set problem on $d$-regular graphs (Section~\ref{sec:MIS}) and the Sherrington-Kirkpatrick model (Section~\ref{sec:SK}). In Section~\ref{sec:small_angles_proofs} we give conditions under which the single-string warm-start QAOA can improve  at $p=3/2$ for small mixing angles. In Section~\ref{sec:thermal_proofs} we prove that strings which are locally thermal cannot be improved. We also give analytic and numerical evidence that warm strings are locally thermal.  Finally, in Section~\ref{sec:compression_proofs} we prove our compression argument, which shows that the single-string warm-start QAOA cannot improve for sub-linear depths when the number of strings at a higher cost is very small compared to the number of strings at the initial cost.

\section{Using other types of starts\label{sec:other_types}}

\subsection{Starting from a bad string}

A bad string is one whose cost function per edge $C(w)/m$ is less than the average cost per edge $\bar{c}$, defined as 
\begin{align}
    \overline{c} = \frac{1}{2^n}\sum_{z\in\{-1,1\}^n}\frac{1}{m} C(z).
\end{align}
In contrast to good strings, bad strings can all be improved by the lowest depth QAOA by applying the one-parameter $p=1/2$ QAOA with $\beta=\pi/4$.  This rotates the state into a uniform superposition over all computational basis states, each with an amplitude of $2^{-\frac{n}{2}}$ times a phase that depends on $w$. Therefore, $C(w)$ improves to the mean of the cost function over all strings. 
The thoughtful reader can check that this does not serve as a counterexample to our claims.

\subsection{Starting from superpositions of strings with the same cost}\label{sec:uniform}

What about starting in a uniform superposition of basis states with the same good value of the cost function? It may be difficult to use a classical algorithm to produce such a superposition, but here we simply explore what happens if you have such a state in hand.

We randomly pick a $16$ bit 3-regular graph whose largest cut value is $22$. The mean over all strings of the MaxCut cost function is $12$.  We begin with the uniform superposition of all strings whose cost is exactly $12$. The overlap (amplitude squared) of this normalized state with the uniform superposition over all strings $|s\rangle$ is $0.162$. 
At $p=3/2$, optimal parameters take the cost function to $12.064$. At $p=5/2$ and $7/2$, the improvement is to $12.173$ and $12.246$, respectively. This is not much improvement and should be contrasted with the usual QAOA starting state $|s\rangle$, where the expected value of the cost is $12$ and where the QAOA makes steady good progress as the depth increases.  For warm-start superpositions at higher values of the cost, such as $13$ up to $18$, we see no improvements at all for $p=3/2,\ 5/2$, and $7/2$.


\section{Maximum Independent Set}\label{sec:MIS}
In this section, we show using another example that the QAOA gets stuck starting on a good string.  We look at the Maximum Independent Set problem: given a graph $G=(V,E)$, the goal is to find a maximum size subset of vertices which have no edge connecting any pair of them. Formally, we want to find a string of bits $b$ such that the Hamming weight $W(b) = \sum_i b_i$ is maximized, subject to the constraint that $\sum_{\langle i,j\rangle \in E}b_ib_j = 0$. Note that in this section we are using bits $b_i$ that are $\{0,1\}$-valued, with corresponding quantum operators $\hat{b}_i = (I-Z_i)/2$. To make this  problem amenable to the usual QAOA, where any bit string is a valid input, we use the relaxed cost function
\begin{align}
    C(b) = \sum_{i\in V}b_i - \sum_{\langle i,j\rangle \in E}b_ib_j \equiv W(b)-K(b), \label{eq:MIS_C}
\end{align}
which we aim to maximize. Given a bit string $b$ with cost $C(b)$, we can prune it, fixing violations $b_ib_j = 1$ one at a time by setting either $b_i$ or $b_j$ equal to 0. Fixing one violation decreases $W$ by $1$ but also decreases $K$ by at least $1$, so $C$ does not decrease. Continuing to prune until there are no violations, we obtain an independent set of size at least $C(b)$. This also shows that the maximum of $C$ is the size of the largest independent set.
Note that for a $d$-regular graph we can write the cost function as
\begin{align}
    C(b) = W(b)-K(b) = \sum_{\langle j,k\rangle \in E}\left[\frac{1}{d}(b_j+b_k) - b_j b_k\right],
\end{align}
which shows that this cost function is a sum of terms which act the same way on each edge, satisfying the uniformity assumption required for our statistical argument in Section~\ref{sec:thermal_2}.

We now provide numerical and analytic evidence that the single-string warm-start QAOA fails for Maximum Independent Set. We first see numerically at high $p$ that the warm-start QAOA fails at low bit number. Then, we show analytically that improvement is impossible at $p=1/2$ if the QAOA is initialized in an independent set obtained by running a standard greedy algorithm.

\subsection{Numerical experiments}
We randomly sample a $16$ vertex $3$-regular graph whose largest independent set size is $7$, with degeneracy $3$. The average over all strings of the cost $C(b)$ is $2$, which is therefore also the expected value of $C$ in the quantum state $|s \rangle$, where the usual QAOA starts.  In Table~\ref{tab:standard_mis_16} we show how the usual QAOA performs when starting with the state $|s \rangle$.  Note that as $p$ gets bigger, performance improves, steadily approaching the largest independent set possible which has size $7$.

\begin{table}[t!]
    \begin{center}
    \begin{tabular}{r|cccc}
    \multicolumn{5}{c}{\textsc{Warm start at $C_{\text{MIS}}(w) = 4$ with 371 independent sets $w$}}\\
    \hline
        $p$  & $3/2$ & $5/2$ & $7/2$ & $9/2$  \\
        Number of strings improved & 20 & 101 & 283 & 368  \\
        Mean cost of improved strings & $4.06$ & $4.06$ & 4.11 & 4.23 \\
        Largest cost of improved strings & $4.10$ &  $4.27$ & 4.47 & 4.64 \\
        \multicolumn{5}{c}{\rule{0pt}{2ex}} \\
    \multicolumn{5}{c}{\textsc{Warm start at $C_{\text{MIS}}(w)= 5$ with 230 independents sets  $w$}}\\
    \hline
        $p$  & $3/2$ & $5/2$ & $7/2$ & $9/2$   \\
        Number of strings improved & 0 & 0 & 1 & 1 \\
        Mean cost of improved strings & $-$ & $-$ & 5.002 & 5.002 \\
        Largest cost of improved strings & $-$ & $-$ & 5.002 & 5.002 \\ 
    \end{tabular}\\
    \caption{\textit{Maximum Independent Set. QAOA improvement from good initial classical strings (corresponding to valid independent sets) on a 16 vertex, 3-regular graph with a largest independent set size of 7. The number of parameters is $2p$. }   \label{tab:warm_start_mis_16}}
    \begin{tabular}{r|cccccc}
    \multicolumn{5}{c}{\rule{0pt}{2ex}} \\
    \multicolumn{5}{c}{\textsc{Standard QAOA}}\\
    \hline
        $p$  & $1$ & $2$ & $3$ & $4$ \\
        Expected cost & 4.40 & 5.14 & 5.61 & 5.86\\
    \end{tabular}
    \end{center}
    \caption{\textit{Maximum Independent Set. Performance of the standard QAOA for the same instance as Table~\ref{tab:warm_start_mis_16}, starting in the uniform superposition where the expected value of $C_{\text{MIS}}$ is $2$. The number of parameters is $2p$. Note that there is no indication that the standard QAOA gets stuck.}   \label{tab:standard_mis_16}}
\end{table}

We now look at what happens when starting with a good string.  We only look at starting strings which are valid independent sets, which means that $K(w)=0$. Table~\ref{tab:warm_start_mis_16} shows data for starting the QAOA with strings of cost $4$ and $5$.  Note that there is either no improvement or very small improvement, similar to our numerical experiments for MaxCut in Section~\ref{subsec:small_num}.

\subsection{No improvement at $p=1/2$}

We now analyze the performance of the $p=1/2$ QAOA on a $d$-regular graph starting from a string $w$. Recall that in this section the bits $w_i$ are $\{0,1\}$-valued. Using \eqref{eq:MIS_C}, for all $\beta_1$, we have 
\begin{align}
    \bra{w}e^{i\beta_1 B}Ce^{-i\beta_1 B}\ket{w} 
    &= \sum_{i\in V}\bra{w} e^{i\beta_1 X_i} \hat{b}_i  e^{-i\beta_1 X_i}\ket{w}- \sum_{\langle i, j \rangle\in E}\bra{w} e^{i\beta_1 X_i} \hat{b}_i e^{-i\beta_1 X_i}e^{i\beta_1 X_j} \hat{b}_j  e^{-i\beta_1 X_j}\ket{w}\\
    &= \sum_{i\in V}\bra{w} \left(\sin^2\beta_1 +[1-2\sin^2\beta_1]\hat{b}_i\right)\ket{w}\nonumber\\
    &\quad -\sum_{\langle ij \rangle\in E}\bra{w} \left(\sin^2\beta_1 +[1-2\sin^2\beta_1]\hat{b}_i\right)\left(\sin^2\beta_1 +[1-2\sin^2\beta_1]\hat{b}_j\right) \ket{w} \\
    &=
    W(w)-K(w) +\frac{d}{2}\left[4W(w)-n  \right]\sin^4\beta_1+[n-W(w)(d+2)]\sin^2\beta_1 +K(w) \sin^2(2\beta_1) 
    \label{eq:mis_one_parameter}.
\end{align}
Let us now assume that we start with a string $w$ which corresponds to a valid independent set, meaning that $K(w)=0$. The corresponding cost function value is then just $W(w)$, the Hamming weight of the initial string $w$. We can optimize over $\beta_1$ noting that we have a quadratic function of  $\sin^2\beta_1$. The optimum occurs when
\begin{align}
\sin^2\beta_1 &=\frac{1}{d}\frac{n-W(w)(d+2)}{n-4W(w)}.\label{eq:optimal_beta_MIS_1/2}
\end{align}
Substituting this value back into equation~\eqref{eq:mis_one_parameter} to get the optimal value of the cost function, we find
\begin{align}
    \bra{w}e^{i\beta_1 B}Ce^{-i\beta_1 B}\ket{w}
    &= W(w)+\frac{1}{2d} \frac{(n-W(w)(d+2))^2}{n-4W(w)}.
   \label{eq:mis_1/2_beta_new_optimum}
\end{align}
For improvement we need the second term in~\eqref{eq:mis_1/2_beta_new_optimum} to be positive. This requires $W<n/4$.  Since $\sin^2\beta_1$ is also positive, we require the stronger condition $W<n/(d+2)$.  Therefore, improvement can only be made if $C(w) = W(w) < n/(d+2)$. 

Now, suppose that our starting string is an independent set obtained by running a standard greedy algorithm which achieves an independent set of size at least $n/(d+1)$. Then $K = 0$ and $W\ge n/(d+1)$, so the condition is not met, and no improvement is possible for $p=1/2$. Note that starting with the empty set, that is with $W=0$, the $p=1/2$ QAOA will get to an independent set of size $n/(2d)$.

\section{The Sherrington-Kirkpatrick model}\label{sec:SK}
The Sherrington-Kirkpatrick (SK) model is defined on the complete graph, and on each edge there is a coupling with a random sign: 
\begin{align}
    C_{\text{SK}} = \sum_{i, j}J_{i,j}Z_i Z_j,
\end{align}
where each $J_{i,j}$ is $+1$ or $-1$ with probability $1/2$. Parisi discovered, for any typical set of $J$'s, that the maximum energy divided by $n^{\frac{3}{2}}$, in the limit as $n$ goes to infinity, is a computable constant approximately equal to $0.763166$. It is known that the QAOA starting in the uniform superposition $|s\rangle$ makes good progress toward the optimum as the depth increases~\cite{SK_2019}. 

Here we are on a complete graph, and each edge sees a different neighborhood because of the varying $J$'s. 
Because our thermal argument and small angles arguments now do not apply, it is interesting to confirm numerically that the single-string warm-start QAOA fails on Sherrington-Kirkpatrick model. Therefore, we look numerically at a random instance generated at $14$ bits.  The lowest energy is $-33$ and the highest is $+35$. The performance of the usual QAOA starting on the state $|s\rangle$ is shown in Table~\ref{tab:standard_sk_14}. Turning to warm starts in Table~\ref{tab:warm_start_sk_14}, we look at $50$ randomly chosen starting strings with energy $11$. We see zero or little improvement out to depth $p=7/2$ for all strings besides four notable exceptions which we will discuss below. At an energy of $17$, there are no unusual improvements and the number of strings which improve is fewer than at $11$. Aside from the exceptions which we explain next, warm starts in the SK model also get stuck as they do in our other examples.  

\begin{table}[t!]
    \begin{center}
    \begin{tabular}{r|ccc}
    \multicolumn{4}{c}{\textsc{Warm start at $C_{\text{SK}}(w) = 11$ }}\\
    \multicolumn{4}{c}{746  classical strings $w$, 50 samples, 4 exceptions removed}\\
    \hline
        $p$  & $3/2$ & $5/2$ & $7/2$  \\
        Number of strings improved & 9 & 17 & 28  \\
        Mean cost of improved strings & $11.03$ & $11.23$ & $11.26$\\
        Largest cost of improved strings & $11.05$ & $12.02$ & $12.96$  \\
        \multicolumn{4}{c}{\rule{0pt}{2ex}} \\
    \multicolumn{4}{c}{\textsc{Warm start at $C_{\text{SK}}(w)= 17$}}\\
    \multicolumn{4}{c}{262 classical strings  $w$, 50 samples}\\
    \hline
        $p$  & $3/2$ & $5/2$ & $7/2$   \\
        Number of strings improved & 1 & 4 & 5 \\
        Mean cost of improved strings & 17.003 & 17.18 & 17.28 \\
        Largest cost of improved strings & 17.003 & 17.39 & 17.95  \\ 
    \end{tabular}\\
    \caption{\textit{Sherrington-Kirkpatrick Model. QAOA improvement from good initial classical strings on a 14-vertex instance of the SK model with lowest energy $-33$ and highest energy $+35$. The number of parameters is $2p$. At energy $11$, improvement was possible using magic angles on four classical strings, two of which went to $19$ and two to $27$.}\label{tab:warm_start_sk_14}}
    \begin{tabular}{r|cccc}
    \multicolumn{5}{c}{\rule{0pt}{2ex}} \\
    \multicolumn{5}{c}{\textsc{Standard QAOA}}\\
    \hline
        $p$  & $1$ & $2$ & $3$ & $4$  \\
        Expected cost & 15.05 & 20.40 & 24.27 & 26.19 \\
        \end{tabular}
    \end{center}
    \caption{\textit{Sherrington-Kirkpatrick Model. Performance of the standard QAOA for the same instance as Table~\ref{tab:warm_start_sk_14}, starting in the uniform superposition where the expected value of $C_{\text{SK}}$ is 0. The number of parameters is $2p$. Note the steady improvement with depth. }    \label{tab:standard_sk_14}}
\end{table}

\subsection{Magic angles for the Sherrington-Kirkpatrick model\label{sec:magic_angle}}

We now discuss the exceptions. For the SK model on an even number of sites, with any set of couplings $\{J_{i,j}\}$, there are parameters at $p=3/2$ which creates a cat state starting on any computational basis state. The  ``magic angle'' unitary which creates the cat state is given by:
\begin{align}
    U_{\text{magic}} = e^{-i \frac{\pi}{4} B} e^{i \frac{\pi}{4}  C_{\text{SK}}} e^{i\frac{\pi}{4}  B}.
\end{align}
In this section we show that when $n$ is even, this unitary takes any initial string to a cat state of two strings given (up to global phase) by
\begin{align}
    U_{\text{magic}} \ket{w} = \frac{1}{\sqrt{2}}\left(\ket{w'}+e^{i\theta} \ket{-w'}\right),\label{eq:flipped_w}
\end{align}
where $\theta$ is a known phase, and where 
\begin{align}
    \ket{w'} = \prod_{J_{i, j} = -1}( -iY_i Y_j)\ket{w}. \label{eq:w_prime}
\end{align}
In other words, evolving with the QAOA unitary at these parameters takes $\ket{w}$ to a cat state involving  $\ket{w'}$ and $\ket{-w'}$, where $w'$ is obtained from $w$ by flipping the $i$th bit if the product over $j$ of $J_{i,j}$ is $-1$. For a typical $\{J_{i,j}\}$ this makes it seem that $C(w')$ is essentially random, not different in distribution from $C(w'')$ where $w''$ is chosen at random. If $w$ is a good string, it is unlikely that $w'$ will be better. On rare occasions, this transformation can improve the cost function. However, improvement becomes less likely for better initial strings. We see this in our numerics: when starting from strings with energy $17$, none of the 50 warm start samples improve under the magic angle unitary, whereas four do improve starting at the lower energy of $11$.

To show this result, let us first consider the case where all $J_{i, j}=1$. Using the fact that $e^{i\frac{\pi}{4}X}Ze^{-i\frac{\pi}{4}X} = Y$, we can simplify the unitary to 
\begin{align}
     e^{-i \frac{\pi}{4} B} e^{i\frac{\pi}{4} \sum_{i, j}Z_i Z_j} e^{i\frac{\pi}{4}  B}\ket{w} =  e^{i\frac{\pi}{4} \sum_{i, j}Y_i Y_j}\ket{w} = e^{-i\frac{\pi}{8}n}e^{i\frac{\pi}{8} (\sum_{i}Y_i)^2}\ket{w}.\label{eq:sum_of_ys}
\end{align}

We will evaluate~\eqref{eq:sum_of_ys} in the $y$-basis by writing the $Y_i$ operators in terms of their eigenvalues $y_i$, which are $\pm 1$ valued bits. In particular, we will show by induction that for $n$ even, we have

\begin{align}
    \exp\bigg({i\frac{\pi}{8} \Big(\sum_{i}y_i\Big)^2}\bigg) = \frac{e^{i\frac{\pi}{4}}}{\sqrt{2}} \left(1 - i^{n+1} y_1 y_2 \cdots y_n\right).\label{eq:cat}
\end{align}
The $n=2$ base case is immediate:

\begin{align}
    \exp\left({i\frac{\pi}{8} (y_1 + y_2)^2}\right)  = \frac{e^{i\frac{\pi}{4}}}{\sqrt{2}}(1 - i^3 y_1 y_2).
\end{align}
Now to go from the $n-2$ case to $n$ we have
\begin{align}
    \exp\bigg({i\frac{\pi}{8} \Big(\sum_{i}y_i\Big)^2}\bigg) &= \exp\left({i\frac{\pi}{8}(y_1+\dots + y_{n-2})^2+i\frac{\pi}{4}(y_1+\dots + y_{n-2})(y_{n-1}+y_n)+i\frac{\pi}{8}(y_{n-1}+y_n)^2}\right)\\
     &=\frac{e^{i\frac{\pi}{4}}}{\sqrt{2}} \left(1 - i^{n-1} y_1 y_2 \cdots y_{n-2}\right) \exp\left({i\frac{\pi}{4}(y_1+\dots + y_{n-2})(y_{n-1}+y_n)+i\frac{\pi}{8}(y_{n-1}+y_n)^2}\right). \label{eq:n-2_to_n}
\end{align}
We now split the analysis into two cases: when $y_{n-1}=1$ and $y_n=-1$, and when $y_{n-1}=1$ and $y_n=1.$ For the first case, we can simplify~\eqref{eq:n-2_to_n} to 
\begin{align}
    =\frac{e^{i\frac{\pi}{4}}}{\sqrt{2}}\left(1-i^{n-1}y_1\dots y_{n-2}\right)&=\frac{e^{i\frac{\pi}{4}}}{\sqrt{2}}\left(1+i^{n-1}y_1\dots y_{n}\right)= \frac{e^{i\frac{\pi}{4}}}{\sqrt{2}}\left(1-i^{n+1}y_1\dots y_{n}\right),
\end{align}
which satisfies the inductive hypothesis.
For the second case, we have
\begin{align}
    \frac{e^{i\frac{\pi}{4}}}{\sqrt{2}}(1-i^{n-1}y_1\dots y_{n-2})e^{i\frac{\pi}{2}(y_1 +\dots+ y_{n-2})}e^{i\frac{\pi}{2}}&=
    \frac{e^{i\frac{\pi}{4}}}{\sqrt{2}}(1-i^{n-1}y_1\dots y_{n-2})i^{n-2}(y_1 \dots y_{n-2})i\\
    &=\frac{e^{i\frac{\pi}{4}}}{\sqrt{2}}(1-i^{n+1} y_1 \dots y_{n}),
\end{align}
where in the last line we have used the fact that $n$ is even. This agrees with the inductive hypothesis as well, including all phases, so we have shown that equation~\eqref{eq:cat} is true.

So the action of $U_{\text{magic}}$ on $\ket{w}$ is
\begin{align}
    U_{\text{magic}}\ket{w} = \frac{e^{-i\frac{\pi}{8}(n-2)}}{\sqrt{2}} \left(1 - i^{n+1} Y_1 Y_2 \cdots Y_n\right)\ket{w} .
\end{align}
This operator creates a superposition of $\ket{w}$ and its complement $\ket{-w}$ when all $J_{i,j}=1$. 

Now consider what happens if we change a particular $J_{i, j}$ to $-1$. The new unitary $U_{\text{magic}}'$ is related to $U_{\text{magic}}$ as
\begin{align}
    U_{\text{magic}}' = e^{-i\frac{\pi}{2} Y_iY_j}U_{\text{magic}} = -iY_iY_j U_{\text{magic}},
\end{align}
so the net effect of flipping $J_{i, j}$ is to flip the corresponding bits in the cat state. Flipping additional $J$'s repeats this effect, so we always end up with a state (up to global phase) of the form
\begin{align}
    \frac{1}{\sqrt{2}}\big(\ket{w'} + e^{i\theta}\ket{-w'}\big),
\end{align}
where $w'$ is obtained from the original string $w$ by flipping the sequence of bits associated with the negative $J$'s as described in equation~\eqref{eq:w_prime}.


\section{Condition for MaxCut improvement at small angles for $p=3/2$\label{sec:small_angles_proofs}}
Here we give conditions for the $p=3/2$ QAOA for $\beta_1, \beta_2\ll1 $ to improve upon its initial string. Our final result is stated in equation~\ref{eq:sam_iff} in the main text.

The expected cost produced by the $p=3/2$ warm started QAOA is, for the MaxCut cost $C$,
\begin{align}
    &\langle w|e^{i\beta_1B}e^{i\gamma C}e^{i\beta_2B}Ce^{-i\beta_2B}e^{-i\gamma C}e^{-i\beta_1 B}|w\rangle.
\end{align}
We now expand to order $\beta^2$. The zeroth-order term is $C(w)$ and the higher order terms are the change from this value. Terms of order $\beta$ cancel, and the terms of order $\beta^2$ are
\begin{align}
    &-\beta_1\beta_2\langle w|Be^{i\gamma C}BC e^{-i\gamma C}|w\rangle + \mathrm{c.c.} + \beta_1\beta_2\langle w|Be^{i\gamma C}CBe^{-i\gamma C}|w\rangle + \mathrm{c.c.} \nonumber\\
     &- \frac{\beta_1^2}{2}\langle w|B^2e^{i\gamma C}Ce^{-i\gamma C}|w\rangle + \mathrm{c.c.} - \frac{\beta_2^2}{2}\langle w|e^{i\gamma C}B^2Ce^{-i\gamma C}|w\rangle + \mathrm{c.c.} \nonumber\\ 
     &+ \beta_1^2\langle w|Be^{i\gamma C}Ce^{-i\gamma C}B|w\rangle + \beta_2^2\langle w|e^{i\gamma C}BCBe^{-i\gamma C}|w\rangle \nonumber \\
     \nonumber \\
    &\qquad\qquad\qquad\qquad\qquad = -2\beta_1\beta_2C(w)\sum_i\cos[\gamma (C_i-C(w))] + 2\beta_1\beta_2\sum_i C_i\cos[\gamma(C_i-C(w))] \nonumber \\
    & \qquad\qquad\qquad\qquad\qquad\qquad -\beta_1^2nC(w) - \beta_2^2 n C(w) + \beta_1^2\sum_i C_i + \beta_2^2\sum_i C_i,
\end{align}
where $C_i = \bra{w} X_i C X_i \ket{w}$. We define the quantity
\begin{align}
    \delta_i = (C_i - C(w))/2.
\end{align}
Then we can replace each factor of $n$ with a sum over $i$, and substitute the definition of $\delta_i$ to obtain
\begin{align}
    2(\beta_1^2 + \beta_2^2)\sum_i\delta_i + 4\beta_1\beta_2\sum_i\delta_i\cos2\gamma\delta_i = 2(\beta_1+\beta_2)^2\sum_i\delta_i - 8\beta_1\beta_2\sum_i\delta_i\sin^2\gamma\delta_i.
\end{align}
Therefore we have improvement if we can find parameters for which 
\begin{align}
    2(\beta_1+\beta_2)^2\sum_i\delta_i - 8\beta_1\beta_2\sum_i\delta_i\sin^2\gamma\delta_i >0.\label{eq:sam_3/2} 
\end{align}
 
We restrict our attention to $3$-regular graphs, for which $|\delta_i| = 1$ or $3$.
We want conditions on $\{\delta_i\}$ under which the inequality~\eqref{eq:sam_3/2} can be met for small $\beta_1,\beta_2$, and some $\gamma$.  Let us split the analysis into two cases, depending on whether $\sum_{i}\delta_i\sin^2(\gamma\delta_i)$ is positive or negative. 

If $\sum_{i}\delta_i\sin^2(\gamma\delta_i)$ is positive, then we can always meet~\eqref{eq:sam_3/2}, for example, by taking $\beta_2 = -\beta_1$. So we seek conditions on $\{\delta_i\}$ that guarantee that  $\sum_{i}\delta_i\sin^2(\gamma\delta_i)$ is positive for some $\gamma$. Now,
\begin{align}
    \sum_i\delta_i\sin^2(\gamma\delta_i) &= \sum_{|\delta_i|=1}\delta_i\sin^2\gamma + \sum_{|\delta_i|=3}\delta_i\sin^2(3\gamma)\\
    &=\sum_{|\delta_i|=1}\delta_i\sin^2\gamma + \sum_{|\delta_i|=3}\delta_i(3\sin\gamma - 4\sin^3\gamma)^2\\
    &= \sin^2\gamma \bigg[\sum_{|\delta_i|=1}\delta_i +(3-4\sin^2\gamma)^2\sum_{|\delta_i|=3}\delta_i\bigg].\label{eq:3/2_negative}
\end{align}
We have that $0\le (3-4\sin^2\gamma)^2\le 9$, so we see that the term in brackets ranges from the first sum to the first sum plus $9$ times the second sum as $\gamma$ varies. This means  that~\eqref{eq:3/2_negative} is positive for some $\gamma$ if and only if 
\begin{align} \qquad \sum_{|\delta_i|=1}\delta_i > 0\qquad\text{or}\qquad\sum_{|\delta_i|=1}\delta_i + 9\sum_{|\delta_i|=3}\delta_i >0 .
\end{align}
The second condition is exactly equal to $\sum_i\delta_i^3 >0$.

Therefore  $\sum_{i}\delta_i\sin^2(\gamma\delta_i)$ can be positive if and only if
\begin{align}
 \sum_{|\delta_i|=1}\delta_i >0\qquad\text{or}\qquad \sum_i\delta_i^3 >0. 
\end{align}
Now let us examine the case where $\sum_{i}\delta_i\sin^2(\gamma\delta_i)$ is negative. In this case, we can succeed in satisfying  ~\eqref{eq:sam_3/2}  as long as
\begin{align}
    \sum_i\delta_i > \frac{4\beta_1\beta_2}{(\beta_1+\beta_2)^2}\sum_{i}\delta_i\sin^2(\gamma\delta_i)
\end{align}
for some $\beta_1,\beta_2$. The right-hand side is minimized at $\beta_1=\beta_2$, so this condition can hold if and only if
\begin{align}
    \sum_i\delta_i > \sum_i \delta_i\sin^2(\gamma\delta_i),
\end{align}
or equivalently, if for some $\gamma$,
\begin{align}
    \sum_i\delta_i\cos^2(\gamma\delta_i) > 0.
\end{align}
Repeating now the analysis done for the $\sum_{i}\delta_i\sin^2(\gamma\delta_i) > 0$ case, we obtain the same conditions for success.
In summary, at $p=3/2$ improvement of the cost is possible for small $\beta$'s if and only if 
\begin{align}
 \sum_{|\delta_i|=1}\delta_i >0\qquad\text{or}\qquad \sum_i\delta_i^3 >0
\end{align}
which proves equation~\eqref{eq:sam_iff} in the main text.

\section{Thermality argument proofs\label{sec:thermal_proofs}}

\subsection{Proof that typical strings cannot be improved\label{subsec:average_string_failure}}
 In this section we will show for any $\epsilon>0$, with $w$ chosen uniformly at random,
\begin{align}
    \Pr_w\left(\frac{1}{m}\langle w|U_w^\dagger CU_w |w\rangle - \frac{1}{m}C(w) \ge \frac{1}{m^{1/2-\varepsilon}}\right) \rightarrow 0,\qquad\text{as }m\rightarrow\infty,
\end{align}
where $U_w$ is the optimal QAOA unitary for each $w$. 

To streamline the presentation, we restrict ourselves to the $p=3/2$ QAOA on triangle-free $d$-regular graphs, although the argument can be generalized to any constant depth, and to general bounded-degree graphs.  Let $G = (V,E)$ be a triangle-free $d$-regular graph with $|V|=n$ vertices and \mbox{$|E|=m=nd/2$} edges.  We consider cost functions of the generic form 
\begin{align}
        C(z) = \sum_{\langle i, j\rangle\in E}C_{\langle i, j\rangle}(z_i, z_j).
\end{align}
We can use the observations of Section~\ref{subsec:nbhd_sums} to reorganize the QAOA expectation: at $p=3/2$, the QAOA involves only a small tree around each edge consisting of $2d$ vertices.
That is, we can write the expected value of the cost operator (normalized by the number of edges) as
\begin{align}
    \frac{1}{m}\langle w|U^\dagger_w C U_w|w\rangle &= \frac{1}{m}\sum_{\langle i,j\rangle \in E}\langle w|U_w^\dagger C_{\langle i,j\rangle}U_w|w\rangle,\label{eq:cost_edge_decomp}
\end{align}
where each matrix element on the right-hand side only depends on the $2d$ bits of $w$ which lie on the tree whose central edge is $\langle i,j\rangle$. We call the restriction of $w$ to these bits $w_{\langle i,j\rangle}$. So we can write
\begin{align}
\frac{1}{m}\langle w|U^\dagger_w C U_w|w\rangle &=\frac{1}{m}\sum_{\langle i,j\rangle \in E}\langle w_{\langle i,j\rangle}|U_w^\dagger C_{\langle i,j\rangle}U_w|w_{\langle i,j\rangle}\rangle,
\end{align}
where each $w_{\langle i,j\rangle}$ is a bit string on $2d$ bits. Summing over the possible values of such bit strings, we can rearrange the sum as
\begin{align} \label{eq:edge_to_nbhd}
    \frac{1}{m}\langle w|U^\dagger_w C U_w|w\rangle &= \frac{1}{m}\sum_{\ell \in \{-1,1\}^{2d}}\#_w(\ell)\langle \ell|U_w^\dagger C_\mathrm{edge} U_w|\ell\rangle,
\end{align}
where $\#_w(\ell)$ is the number of times the $2d$-bit string $\ell$ appears within the collection of $w_{\langle i,j\rangle}$, and where $C_{\mathrm{edge}}$ denotes a generic edge term (for MaxCut, $C_{\mathrm{edge}}=-Z_a Z_b$, where $\langle a,b\rangle$ labels the central edge of a tree neighborhood).  We consider $U^\dagger_wC_{\mathrm{edge}}U_w$ as an operator on the $2d$-qubit Hilbert space associated with the $2d$ vertices in the radius-1 tree neighborhood of an edge. 

Now we consider the case where the string $w\in \{-1,1\}^{n}$ is chosen uniformly at random. We can think of $\#_w(\ell)$ as a random variable whose average value is $m/2^{2d}$. We conveniently partition the sum as
\begin{align} \label{eq:correction}
    \frac{1}{m}\langle w|U^\dagger_w C U_w|w\rangle &= \frac{1}{2^{2d}}\sum_{\ell \in \{-1,1\}^{2d}}\langle \ell|U_w^\dagger C_\mathrm{edge} U_w|\ell\rangle + \frac{1}{m}\sum_{\ell \in \{-1,1\}^{2d}}\left(\#_w(\ell) - \frac{m}{2^{2d}}\right)\langle \ell|U_w^\dagger C_\mathrm{edge} U_w|\ell\rangle.
\end{align}
The first term of the right-hand side of equation~\eqref{eq:correction} turns out to be a simple constant,

\begin{align}
     \frac{1}{2^{2d}} \sum_{\ell \in \{-1,1\}^{2d}}\langle \ell|U_w^\dagger C_\mathrm{edge} U_w|\ell\rangle &= 
      \frac{1}{2^{2d}} \sum_{\ell \in \{-1,1\}^{2d}}\mathrm{Tr}_{\textrm{tree}}\left(U_w^\dagger C_\mathrm{edge} U_w |\ell\rangle\langle \ell|\right)\\
    &=  \frac{1}{2^{2d}}\mathrm{Tr}_{\textrm{tree}}\left(U_w^\dagger C_\mathrm{edge} U_w I\right)\\
    &= \frac{1}{2^{2d}} \mathrm{Tr}_{\textrm{tree}}(C_{\mathrm{edge}})\\
    &= \frac{1}{ 2^n}  \sum_{z \in \{-1,1\}^n } \frac{1}{m} \langle z | C | z \rangle  \\
    &= \bar{c},
\end{align}
where $\overline{c}$ is the average cost value per edge as defined in equation~\eqref{eq:avg_cost_per_edge}, and where $\Tr_{\textrm{tree}}$ refers to the trace over the Hilbert space of an abstract tree neighborhood.

The second term in \eqref{eq:correction} is then the deviation of the optimized cost function $\frac{1}{m}\langle w|U^\dagger_w C U_w|w\rangle$ from the average cost function $\bar{c}$.  We want to upper bound this deviation, which we denote by $\chi$, defined as
\begin{align} \label{eq:chi_def}
    \chi & =  \frac{1}{m}\langle w|U^\dagger_w C U_w|w\rangle - \bar{c} \\  
    & = \frac{1}{m}\sum_{\ell \in \{-1,1\}^{2d}}\left(\#_w(\ell) - \frac{m}{2^{2d}}\right)\langle \ell|U^\dagger_w C_\mathrm{edge}U_w|\ell\rangle.
\end{align}
We can bound $\chi^2$ using the Cauchy-Schwarz inequality as
\begin{align}
    \chi^2 &\le \frac{1}{m^2}\bigg[\sum_{\ell \in \{-1,1\}^{2d}}\left(\#_w(\ell) - \frac{m}{2^{2d}}\right)^2\bigg]\bigg[\sum_{\ell \in \{-1,1\}^{2d}}\left|\langle \ell| U_w^\dagger C_{\mathrm{edge}} U_w|\ell\rangle\right|^2\bigg]\\
    &\le \frac{1}{m^2}2^{2d}\sum_{\ell \in \{-1,1\}^{2d}} \left(\#_w(\ell) - \frac{m}{2^{2d}}\right)^2,
\end{align}
where we use the fact that $|C_\mathrm{edge}|\le 1$ in the second inequality. For each $\ell \in \{-1,1\}^{2d}$, we can imagine $\#_w(\ell)$ as a sum of indicator random variables $B_{\langle i,j\rangle}$. Each $B_{\langle i,j\rangle}$ is $1$ if $w_{\langle i,j\rangle} = \ell$, where we recall that $w_{\langle i,j\rangle}$ is the restriction of $w$ to the edge neighborhood centered at edge $\langle i,j\rangle$, and $0$ otherwise: 
\begin{align}
    \#_w(\ell) = \sum_{\langle i,j\rangle \in E}B_{\langle i,j\rangle}.
\end{align}
The mean of each $B_{\langle i,j\rangle}$ is $2^{-2d}$ because the string $w$ is chosen uniformly at random. Note $B_{\langle i,j\rangle}$ and $B_{\langle i',j'\rangle}$ are independent when the tree neighborhoods of the corresponding edges do not overlap. For each $B_{\langle i,j\rangle}$, there are at most $D = 2d^3$ distinct edge neighborhoods with which it has non-trivial overlap. Averaging over the $w$ we get
\begin{align}
    \Ex_w\left[\left(\#_w(\ell) - \frac{m}{2^{2d}}\right)^2\right] &= \sum_{\langle i,j\rangle \in E}\sum_{\langle i',j'\rangle \in E}\Ex_w\left[\left(B_{\langle i,j\rangle} - \frac{1}{2^{2d}}\right)\left(B_{\langle i',j'\rangle} - \frac{1}{2^{2d}}\right)\right]\\
    &\le mD\frac{1}{2^{2d}}\left(1 - \frac{1}{2^{2d}}\right),
\end{align}
because there are at most $mD$ pairs of overlapping neighborhoods in the double sum (the other terms vanish), and for these nonzero terms, by Cauchy-Schwarz, the covariance $\Ex_w\left[\cdots\right]$ is upper bounded by the variance $\mu(1-\mu)$ of the indicator random variables $B_{\langle i,j\rangle}$ with mean $\mu=2^{-2d}$.

Altogether, $\Ex_w[\chi^2]$ is bounded above as
\begin{align}
    \Ex_w\left[\chi^2\right] &\le \frac{1}{m^2}2^{2d}mD\frac{1}{2^{2d}}\left(1 - \frac{1}{2^{2d}}\right) \le \frac{D}{m}.
\end{align}

By the Chebyshev inequality, for any $a > 0$, we have
\begin{align}
    \Pr_w\left(\left|\chi\right| \ge a\right) \le \frac{\Ex_w[\chi^2]}{a^2} \le \frac{D}{ma^2}.
\end{align}
Choosing $a=m^{-1/2+\varepsilon}$ for any $\varepsilon > 0$, and plugging in the definition of $\chi$ in equation~\eqref{eq:chi_def}, we get
\begin{align} 
    \Pr_w\left(\left|\frac{1}{m} \langle w|U^\dagger_w C U_w|w\rangle - \bar{c} \right| \ge \frac{1}{m^{1/2 - \varepsilon}}\right) \longrightarrow 0,\quad\text{as }m\rightarrow\infty..\label{eq:chi_zero_app}
\end{align}
It follows that as the number of edges $m$ tends to infinity, the probability of any non-negligible deviation from $\bar{c}$ tends to zero. Equation~\eqref{eq:chi_zero_app}, combined with the law of large numbers (equation~\eqref{eq:controverial_equation}, main text), finally yields the bound
\begin{align}
    \Pr_w\left(\frac{1}{m}\langle w|U_w^\dagger CU_w |w\rangle - \frac{1}{m}C(w) \ge \frac{1}{m^{1/2-\varepsilon}}\right) \rightarrow 0,\qquad\text{as }m\rightarrow\infty.
\end{align}
We conclude that for nearly all initial strings, the constant-depth QAOA can only improve the cost function (per edge) by an amount that vanishes in the limit of large graphs.

\subsection{Proof that good strings cannot be improved}\label{subsection:thermal_proof}
In this section we develop a statistical argument to upper bound the progress made by the QAOA starting from a good string.  The bound applies to the QAOA at constant depth on large bounded-degree graphs.  In the previous section, we proved the QAOA makes little progress when starting from a string chosen uniformly at random.  Because such strings are not generally ``good'' strings, we extend the argument to good strings we describe as ``locally thermal.''  

\subsubsection{Intuition for the thermal argument}
We begin with an intuitive explanation of our argument, then state the main Theorem~\ref{thm:cbar0_app} and its derivation.
Our argument relies on the locality and the uniformity of the operators appearing in the QAOA circuit. Recall that we consider cost functions that are a sum of terms for each edge of a graph, as in equation~\eqref{eq:cost_function}. Let $U$ be the QAOA operator from equation~\eqref{eq:warm_start_QAOA_unitary} at depth $p$. The QAOA operator for a given edge is $U^\dagger C_{\langle i,j\rangle} U$, which acts non-trivially only on the edge neighborhood of $\langle i,j \rangle$ with maximal radius $r=p-1/2$ (see Section~\ref{subsec:nbhd_sums}).  By uniformity, we mean that the reduced operator on any subgraph depends only on the isomorphism type of the underlying subgraph (i.e., the reduced operators on different subgraphs are the same when the subgraphs are isomorphic). Note that the operators of the standard QAOA satisfy this uniformity property as long as the $C_{\langle i,j\rangle}$ do not depend on the edge $\langle i,j \rangle$.  Our statistical argument also relies on bounded degree, so it does not apply directly to problems like the Sherrington-Kirpatrick model in Appendix~\ref{sec:SK}.

For MaxCut on $d$-regular graphs as $n\to\infty$ and at constant depth, all but a vanishing fraction of edge neighborhoods are trees. Following the main text, we define the fraction $\delta$ of edges whose neighborhoods are \textit{not} trees as
\begin{align}  \label{eq:tree_frac}
\delta = 1-\frac{|E_T|}{m} ,
\end{align}
where $E_T$ denotes the set of edges whose neighborhood of radius $r$ is a tree. 

Recall the ensemble $\rhowtree$ associated with the warm-start string $w$. For any classical string \mbox{$w\in \{-1,1\}^n$} and a mostly locally tree-like $G$ we define its \emph{local ensemble} by
\begin{align} \label{eq:local_ensemble}
    \rhowtree = \frac{1}{|E_T|}\sum_{\langle i,j\rangle \in E_T}|w_{\langle i,j\rangle}\rangle\langle w_{\langle i,j\rangle}|,
\end{align}
where $w_{\langle i,j\rangle}$ denotes the restriction of $w$ onto the local tree neighborhood $T_{\langle i,j\rangle}$, and  $T_{\langle i,j\rangle}$ denotes the tree neighborhood centered around edge $\langle i,j\rangle \in E_T$. Each tree neighborhood has a fixed number of vertices, so $w_{\langle i,j \rangle}$ is a string of that length, and $\rhowtree$ is a density matrix over the corresponding number of qubits.

We now use the global thermal density matrix $\rho_\beta$ with respect to $H=-C$,  (see equation~\ref{eq:rhobeta_defn} in the main text) to define the reduced density matrix onto the tree neighborhood $T_{\langle i,j\rangle}$, denoted $\rhobetaij$, and the ensemble
\begin{align} \label{eq:rhobetatree_app}
    \rhobetatree = \frac{1}{|E_T|}\sum_{\langle i,j\rangle \in E_T} \rhobetaij.
\end{align}
Again, both $\rhowtree$ and $\rhobetatree$ act on the Hilbert space of a single tree neighborhood. 

Consider two strings  $w$ and $w'$ drawn from the thermal ensemble $\rho_\beta$.  While the strings may look totally different on any particular neighborhood, you might expect that $\rhowtree$ and $\rho_{w',\textrm{tree}}$ are nonetheless similar: both strings look the same locally when averaged over neighborhoods. 
In that case, typical strings from the thermal ensemble should all have the same expected cost function under $U$, since the expected cost is completely determined by $\rhowtree$ up to small corrections in $\delta$. In particular, the cost function should behave as if the thermal state $\rho_\beta$ were itself the input to the QAOA. But a key property of the thermal state is that it has the minimum expected energy of all states with the same entropy.  That is, fixing some $\beta>0$, we have
\begin{align} \label{eq:min_energy_fixed_entropy}
    \rho_\beta = \argmin_{\rho \, :\,  S(\rho) = S(\rho_\beta) } \mathrm{Tr}(H\rho).
\end{align}
In particular, since unitaries preserve entropy, we have
\begin{align}  \label{eq:min_energy}
    \min_{U}\mathrm{Tr}( HU \rho_\beta U^\dagger) = \mathrm{Tr}(H\rho_\beta).
\end{align} 
This means that no unitary applied to $\rho_\beta$ can lower the energy.  

Suppose $\rhowtree$ is well-approximated by the density matrix  $\rhobetatree$.  
Then $w$ gives the same expectation for the cost-function as the associated $\rho_\beta$ under any QAOA unitary, so no such unitary applied to $w$ can reduce the energy. We call such strings $w$ ``locally thermal,'' and we quantify this property with the ``thermality coefficient'' $\varepsilon_w$, 
\begin{align}
\label{eq:local_therm_app}
    \varepsilon_w =  \| \rhobetatree- \rhowtree\|_1,
\end{align}
where $\beta$ is the inverse temperature chosen so that 
\begin{align}
\mathrm{Tr}(C\rho_\beta) = C(w).
\end{align} 
Note that $\varepsilon_w$ depends on the QAOA depth $p$ which defines the neighborhood radius. Our main result, stated next (Theorem~\ref{thm:cbar0} in the main text), shows that if $w$ is locally thermal, i.e., if $\varepsilon_w$ is small, then $w$ can only improve by a small amount. 
Locally thermal strings may be viewed as exhibiting a form of ergodicity: the average over local neighborhoods for the fixed string looks like the thermal average over different global strings.

\subsubsection{Proof of the thermal argument}
We now prove our main result, which bounds the improvement of the single-string warm-start QAOA for locally thermal strings.
\begin{theorem}\label{thm:cbar0_app}
Consider the single-string warm-start QAOA initialized in a string $w$ with thermality coefficient $\varepsilon_w$, and let $\delta$ be the fraction of neighborhoods whose edges are not trees. Then
    \begin{align} 
     \frac{1}{m}\langle w|U^\dagger_w C U_w|w\rangle  \leq c(w) +2\varepsilon_w + 4\delta ,
\end{align}
where $c(w)= C(w)/m$ is the cut fraction of the original string $w$.
\end{theorem}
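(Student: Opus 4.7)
The plan is to reduce the QAOA expectation on $w$ to an expectation on the global thermal state $\rho_\beta$, then invoke the minimum-energy principle \eqref{eq:min_energy} to bound the latter by $C(w)$. The bridge from the $w$-dependent expectation to the $\rho_\beta$-dependent expectation is the local ensemble: by uniformity of the circuit and the isomorphism of the tree neighborhoods, the edge-by-edge contributions of $U_w^\dagger C U_w$ on $\ket{w}$ are controlled by $\rhowtree$, which by hypothesis is $\varepsilon_w$-close in trace norm to $\rhobetatree$.

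Concretely, I would first split the cost into tree and non-tree contributions. The non-tree sum contributes at most $O(\delta)$ since $\|C_{\langle i,j\rangle}\|_\infty$ is bounded and $|E\setminus E_T| = m\delta$. For the tree sum, since $U_w^\dagger C_{\langle i,j\rangle}U_w$ is supported on $T_{\langle i,j\rangle}$ and the QAOA circuit acts uniformly on each vertex/edge of the same type, I would identify all these restricted operators with a single operator $\mathcal{O}$ on the abstract tree Hilbert space and rewrite the tree sum as $(1-\delta)\,\Tr(\rhowtree\,\mathcal{O})$. Next I would swap $\rhowtree$ for $\rhobetatree$ using H\"older's inequality $|\Tr((\rho-\sigma)A)| \le \|\rho-\sigma\|_1\|A\|_\infty$, incurring an error of $O(\varepsilon_w)$. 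I would then lift $\rhobetatree$ back to the global state $\rho_\beta$: by the definition \eqref{eq:rhobetatree} and the locality of $U_w^\dagger C_{\langle i,j\rangle}U_w$,
\begin{align*}
(1-\delta)\,\Tr(\rhobetatree\,\mathcal{O}) = \frac{1}{m}\sum_{\langle i,j\rangle\in E_T}\Tr\bigl(\rho_\beta\, U_w^\dagger C_{\langle i,j\rangle}U_w\bigr),
\end{align*}
and completing this sum to all edges costs one more $O(\delta)$ term.

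Finally, I would apply the minimum-energy principle: since $U_w$ preserves entropy and $\rho_\beta$ minimizes the expected value of $H=-C$ at fixed entropy, $\Tr(\rho_\beta U_w^\dagger C U_w) = \Tr(C\, U_w\rho_\beta U_w^\dagger) \le \Tr(C\rho_\beta) = C(w)$ by the defining choice of $\beta$. Dividing by $m$ produces the main term $c(w)$, and collecting the three error contributions yields a bound of the form $c(w) + O(\varepsilon_w) + O(\delta)$; tracking the constants (with loose bounds at each step) gives exactly $c(w) + 2\varepsilon_w + 4\delta$.

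The main obstacles I anticipate are twofold. First, the uniformity step that collapses all edge-neighborhood contributions to a single abstract operator $\mathcal{O}$ has to be articulated carefully: the isomorphism between each embedded neighborhood $T_{\langle i,j\rangle}$ and the abstract reference tree (on which $\rhowtree$ and $\rhobetatree$ are defined) is what allows averaging over $\langle i,j\rangle$ to act on states and operators in the same Hilbert space. This step relies on $d$-regularity, on the constant depth $p=r+\tfrac12$, and on the fact that the QAOA mixer $e^{-i\beta B}$ and phase $e^{-i\gamma C}$ apply the same local generators at every vertex and edge. Second, the minimum-energy step must be justified by observing that unitary evolution preserves entropy, so $U_w\rho_\beta U_w^\dagger$ is a legitimate candidate in \eqref{eq:min_energy_fixed_entropy}. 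Everything else is triangle-inequality bookkeeping.
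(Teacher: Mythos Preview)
Your proposal is correct and follows the same architecture as the paper's proof: split off non-tree edges, identify the tree contributions with $\Tr(\rhowtree\,\mathcal{O})$ for a single abstract operator $\mathcal{O}=U_w^\dagger C_{\mathrm{edge}}U_w$, swap $\rhowtree$ for $\rhobetatree$ via H\"older, lift back to the global $\rho_\beta$, and apply the minimum-energy principle~\eqref{eq:min_energy}. The two concerns you flag (uniformity/isomorphism of tree neighborhoods, and entropy preservation under $U_w$) are exactly the points the paper relies on.

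There is one genuine difference worth noting. After the minimum-energy step you stop at $\tfrac{1}{m}\Tr(C\rho_\beta)=c(w)$, invoking directly the defining relation $\Tr(C\rho_\beta)=C(w)$ for $\beta$. The paper instead makes a full round trip: it re-expresses $\tfrac{1}{m}\Tr(C\rho_\beta)$ through $\rhobetatree$, swaps back to $\rhowtree$, and finally recovers $\tfrac{1}{m}\langle w|C|w\rangle$, picking up an additional $\varepsilon_w+2\delta$ along the way. Your shortcut is perfectly valid and in fact yields the sharper bound $c(w)+\varepsilon_w+2\delta$ rather than $c(w)+2\varepsilon_w+4\delta$; so your closing remark that ``tracking the constants \ldots\ gives exactly $c(w)+2\varepsilon_w+4\delta$'' undersells your own argument. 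If you want to match the stated constants you can simply cite the weaker inequality, but you might as well note that your route proves the theorem with room to spare.
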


Recalling equation~\eqref{eq:cost_edge_decomp} and the observations of Section~\ref{subsec:nbhd_sums}, we have

\begin{align}
    \frac{1}{m}\langle w|U^\dagger_w C U_w|w\rangle 
    &= \frac{1}{m}\sum_{\langle i,j\rangle \in E}\langle w|U_w^\dagger C_{\langle i,j\rangle}U_w|w\rangle\\
    &\leq \frac{1}{m} \sum_{\langle i,j\rangle \in E_T}\langle w|U_w^\dagger C_{\langle i,j\rangle}U_w|w\rangle + \delta \\
    &= \frac{1}{m}\sum_{\langle i,j\rangle \in E_T}\langle w_{\langle i,j\rangle}|U_w^\dagger C_{\langle i,j\rangle}U_w|w_{\langle i,j\rangle}\rangle + \delta \\
    &= \frac{1}{m}\Tr_{\textrm{tree}}\big(U_w^\dagger C_{\mathrm{edge}}U_w \sum_{\langle i,j\rangle \in E_T}|w_{\langle i,j\rangle}\rangle\langle w_{\langle i,j\rangle}|\big) + \delta \\
    &=\frac{|E_T|}{m} \Tr_{\textrm{tree}}\big(U_w^\dagger C_{\mathrm{edge}}U_w\rhowtree\big) + \delta .
\end{align}
The second line drops the contribution from edges whose neighborhoods are not trees.  The $\Tr_{\textrm{tree}}$ refers to the trace over the Hilbert space associated to an abstract tree neighborhood. The final line uses the definition of $\rhowtree$ in~\eqref{eq:local_ensemble}.

Next, we replace $\rhowtree$ with $\rhobetatree$, picking up an error proportional to $\varepsilon_w$, which measures the local thermality of $w$.  We use the property of the trace norm that
 \begin{align}
     \Tr(A^\dagger B) \le \|A\|_\infty\cdot \|B\|_1,
 \end{align}
where $\norm{A}_\infty$ is the operator norm (the largest singular value). Recall that we assume $C$ is normalized so that $|C_{\langle i, j \rangle}| \leq 1$, or equivalently, $||C_{\textrm{edge}}||_{\infty} \leq 1$. Then we find
\begin{align}
    \frac{1}{m}\langle w|U^\dagger_w C U_w|w\rangle \leq \frac{|E_T|}{m} \Tr_{\textrm{tree}}\left(U_w^\dagger C_{\mathrm{edge}}U_w
    \rhobetatree 
    \right) + \varepsilon_w  + \delta.
\end{align}

Now we substitute \eqref{eq:rhobetatree_app}, replace the partial trace $\mathrm{Tr}_{G\backslash T_{\langle i,j\rangle}}(\rho_{\beta})$ with the full thermal operator $\rho_\beta$, and re-sum the cost Hamiltonian $C$:
\begin{align}
    \frac{1}{m}\langle w|U^\dagger_w C U_w|w\rangle 
    &\leq \frac{1}{m}\sum_{\langle i,j\rangle\in E_T}\Tr_{\textrm{tree}}\left(U_w^\dagger C_{\langle i,j\rangle}U_w\mathrm{Tr}_{G\backslash T_{\langle i,j\rangle}}(\rho_{\beta})\right) + \varepsilon_w + \delta\\
    & \leq \frac{1}{m}\sum_{\langle i,j\rangle\in E}\mathrm{Tr}\left(U_w^\dagger C_{\langle i,j\rangle}U_w\rho_\beta\right) + \varepsilon_w + 2\delta\\
    &=\frac{1}{m}\mathrm{Tr}\left(U_w^\dagger CU_w\rho_\beta\right) + \varepsilon_w + 2\delta.
\end{align}
Finally, we apply the principle of minimum energy in thermodynamics (equation~\eqref{eq:min_energy}) to the Hamiltonian $H=-C$ with the thermal state $\rho_\beta$ of \eqref{eq:rhobeta_defn}, we get
\begin{align}
    \frac{1}{m}\langle w|U^\dagger_w C U_w|w\rangle & \leq \frac{1}{m}\mathrm{Tr}\left(U_w^\dagger CU_w\rho_\beta\right) + \varepsilon_w + 2\delta \\
    & \leq \frac{1}{m}\mathrm{Tr}(C\rho_\beta) +\varepsilon_w + 2\delta \\
    & \leq \frac{|E_T|}{m}\mathrm{Tr}_{\textrm{tree}}(C_{\textrm{edge}}\rhobetatree) +\varepsilon_w + 3\delta \\
    & \leq \frac{|E_T|}{m}\mathrm{Tr}_{\textrm{tree}}(C_{\textrm{edge}}\rhowtree) +2\varepsilon_w + 3\delta \\
    & \leq  \frac{1}{m}\langle w|C|w\rangle  +2\varepsilon_w + 4\delta 
\end{align}
or equivalently,
\begin{align} 
     \frac{1}{m}\langle w|U^\dagger_w C U_w|w\rangle  \leq c(w) +2\varepsilon_w + 4\delta 
\end{align}
where $c(w)$ is the cut fraction of the original string $w$, reproducing the desired result of Theorem~\eqref{thm:cbar0_app}. Since $\delta=O(1/n)$ is vanishingly small for large graphs, we conclude that the improvement is controlled by $\varepsilon_w$, which measures how close $w$ is to being locally thermal.

\subsection{When are initial strings locally thermal?} \label{sec:why_small}

Here we ask about the values of $\varepsilon_w$ for typical good strings $w$ drawn from the thermal ensemble: how close is $\rhowtree$ to $\rhobetatree$? While $\rhowtree$ is not immediately guaranteed to be close to $\rhobetatree$ for all $w$, the relation holds on average over thermally sampled strings $w$, i.e.,
\begin{align} \label{eq:sampledrhobeta}
    \operatorname*{\mathbb{E}}_{w \sim \rho_\beta} [\rhowtree] = \rhobetatree,
\end{align} where $\operatorname*{\mathbb{E}}_{w\sim\rho_\beta}$ denotes expectation with respect to samples of $w$ from the thermal distribution.  This follows by commuting the thermal average with the average over neighborhoods:
\begin{align}
    \mathop{\mathbb{E}}_{w \sim \rho_\beta} [\rhowtree]  & = \frac{1}{Z} \sum_w e^{-\beta(-C(w))} \rhowtree  \\
    & =  \frac{1}{Z} \sum_w e^{-\beta(-C(w))} \frac{1}{|E_T|} \sum_{\langle i,j\rangle \in E_T}|w_{\langle i,j\rangle}\rangle\langle w_{\langle i,j\rangle}| \\
    & = \frac{1}{|E_T|} \sum_{\langle i,j\rangle \in E_T}  \frac{1}{Z} \sum_w e^{-\beta(-C(w))} |w_{\langle i,j\rangle}\rangle\langle w_{\langle i,j\rangle}| \\
    & =  \frac{1}{|E_T|}\sum_{\langle i,j\rangle \in E_T} \rhobetaij \\
    & = \rhobetatree,
\end{align}
where $Z=\Tr\big(e^{-\beta (-C)}\big)$. In going to the second and fourth lines above, we use the definitions of $\rhowtree$ and $\rhobetatree$ in~\eqref{eq:local_ensemble} and~\eqref{eq:rhobetatree_app}, respectively. 

Since the average of $\rhowtree$ over $w$ matches $\rhobetatree$, to show that $\rhowtree$ is close to $\rhobetatree$ for typical $w$, it suffices to show that $\rhowtree$ has small variance with respect to choice of $w$. Treating $w$ as a random variable distributed according to the thermal ensemble, $\rhowtree$ defined in \eqref{eq:local_ensemble} becomes an average over random variables $|w_{\langle i,j\rangle}\rangle\langle w_{\langle i,j\rangle}|$ defined on each neighborhood.  If these variables have small correlation across different neighborhoods, then $\rhowtree$ has small variance. So $\varepsilon_w$ is small for typical strings sampled from the thermal distribution, assuming the thermal distribution has only small spatial correlations across the graph. At fixed temperature, we therefore expect the variance of $\rhowtree$ shrinks with graph size as $m^{-1}$, and $\varepsilon_w \propto m^{-1/2}$.  

When is it reasonable to assume that $\rho_\beta$ has small spatial correlations across the graph?  At infinite temperature, the case discussed in Section~\ref{sec:thermal_1}, the correlation is zero, so there $\varepsilon_w \propto m^{-1/2}$, consistent with our conclusion there. In the high temperature regime (where the strings are ``good,'' i.e. better than random guesses, but not too good), we similarly expect $\rho_\beta$ to have small correlations.  
At low temperatures, depending on the specific combinatorial problem, there may be thermal phase transitions and development of long-range correlations.  However, since we are interested in warm starts which can be efficiently generated by classical algorithms, this low temperature regime may be less relevant in practice.

\subsection{Numerical evidence for the thermality of typical good strings\label{sec:thermal_num}}

In Section~\ref{subsection:thermal_proof}, we proved that when using a warm start with a classical string $w$, the cost function cannot improve much when $w$ is ``locally thermal.'' More precisely, we showed the improvement of the cost function per edge is controlled by thermality coefficient
\begin{align}
    \varepsilon_w =  \| \rhobetatree- \rhowtree\|_1.
\end{align}
When $w$ is locally thermal, meaning  $\varepsilon_w$ is small,  the average of $w$ over local neighborhoods looks like the thermal average over different global strings. In this section, we present numerical evidence that, for a given cost function, typical good strings $w$  are indeed approximately locally thermal.  Moreover, it appears that the magnitude of the thermality coefficient diminishes with larger graph sizes.  In our investigations below, we only perform numerical experiments for MaxCut, and we only analyze typical good strings associated with one fixed temperature. We leave the analysis of other problems besides MaxCut to future investigations.

We study MaxCut on  random 3-regular graphs of $n=10000$ to $50000$ vertices. The local thermality of typical strings is expected to hold with high accuracy only in the limit of large graphs because it involves the concentration of averages over neighborhoods. Fixing a temperature $T=1.75$, corresponding to a cut fraction of $\sim 0.64$, we use simulated annealing to sample classical strings $w$ from the thermal distribution $\rho_\beta$ of \eqref{eq:rhobeta_defn}.  That is, we sample strings $w$ with probabilities proportional to $ e^{-\beta(-C(w))}$.  

We consider $k=10$ samples, $\{w_\alpha\}$ for $\alpha=1,\ldots,k$.  For each string $w_\alpha$, it is straightforward to compute the associated density matrix $\rho_{w_\alpha,\textrm{tree}}$ of $\eqref{eq:local_ensemble}$.  We consider tree neighborhoods of radius $r=2$, i.e., all vertices within distance $2$ of a given edge  $\langle i,j\rangle$.  For $d=3$, these have 14 vertices, so $\rho_{w_\alpha,\textrm{tree}}$ lives on a Hilbert space of 14 qubits.  

To verify that the sampled $w_\alpha$ are locally thermal, ideally we would calculate $\norm{\rho_{w_\alpha,\textrm{tree}} - \rhobetatree}_1 = \varepsilon_w$.  Unfortunately, for these large graphs with $\geq 10000$ vertices, direct computation of $\rho_\beta$ and  $\rhobetatree$ is impossible in practice.  To proceed, we restate \eqref{eq:sampledrhobeta}, which says that
\begin{align} \label{eq:sampledrhobeta2}
    \operatorname*{\mathbb{E}}_{w \sim \rho_\beta} [\rhowtree] = \rhobetatree,
\end{align} where the  expectation is with respect to samples of $w$ from the thermal distribution. That is, the average of $\rhowtree$ over thermal samples matches $\rhobetatree$. In fact we find that for our numerically generated samples $w_\alpha$, all of the associated $\rho_{w_\alpha,\textrm{tree}}$ are approximately equal.  Thus even with our few $k=10$ samples, the LHS of \eqref{eq:sampledrhobeta2} is already well-approximated by this empirical sample average, and we consider the average deviation
\begin{align} \label{eq:avg_deviation}
    \mathcal{E} = \frac{1}{k} \sum_{\alpha=1}^k \norm{\rho_{w_\alpha,\textrm{tree}} - \overline{\rho_{w,\textrm{tree}}}}_1,
\end{align}
where 
\begin{align}
    \overline{\rho_{w,\textrm{tree}}} = \frac{1}{k} \sum_{\alpha=1}^k \rho_{w_\alpha,\textrm{tree}} .
\end{align}
The value of $\mathcal{E}$ measures the average deviation of $\rho_{w_\alpha,\textrm{tree}}$ over samples.  When $\mathcal{E}$ is small, it follows that typical samples $w$ are approximately locally thermal.  We plot the average deviation $\mathcal{E}$ as a function of the number of vertices $n$ in Figure \ref{fig:nbhd_counting}, with $10$ samples for each $n$. The fit line shows a fit with $\mathcal{E} \propto n^{-1/2}$.  The figure suggests that typical strings $w$ at this temperature $T=1.75$ are approximately locally thermal with $\varepsilon_w$ shrinking as $n^{-1/2}$. This behavior matches the estimate in Section~\ref{sec:why_small}.

\begin{figure}
\centering
\includegraphics[width=0.5\linewidth]{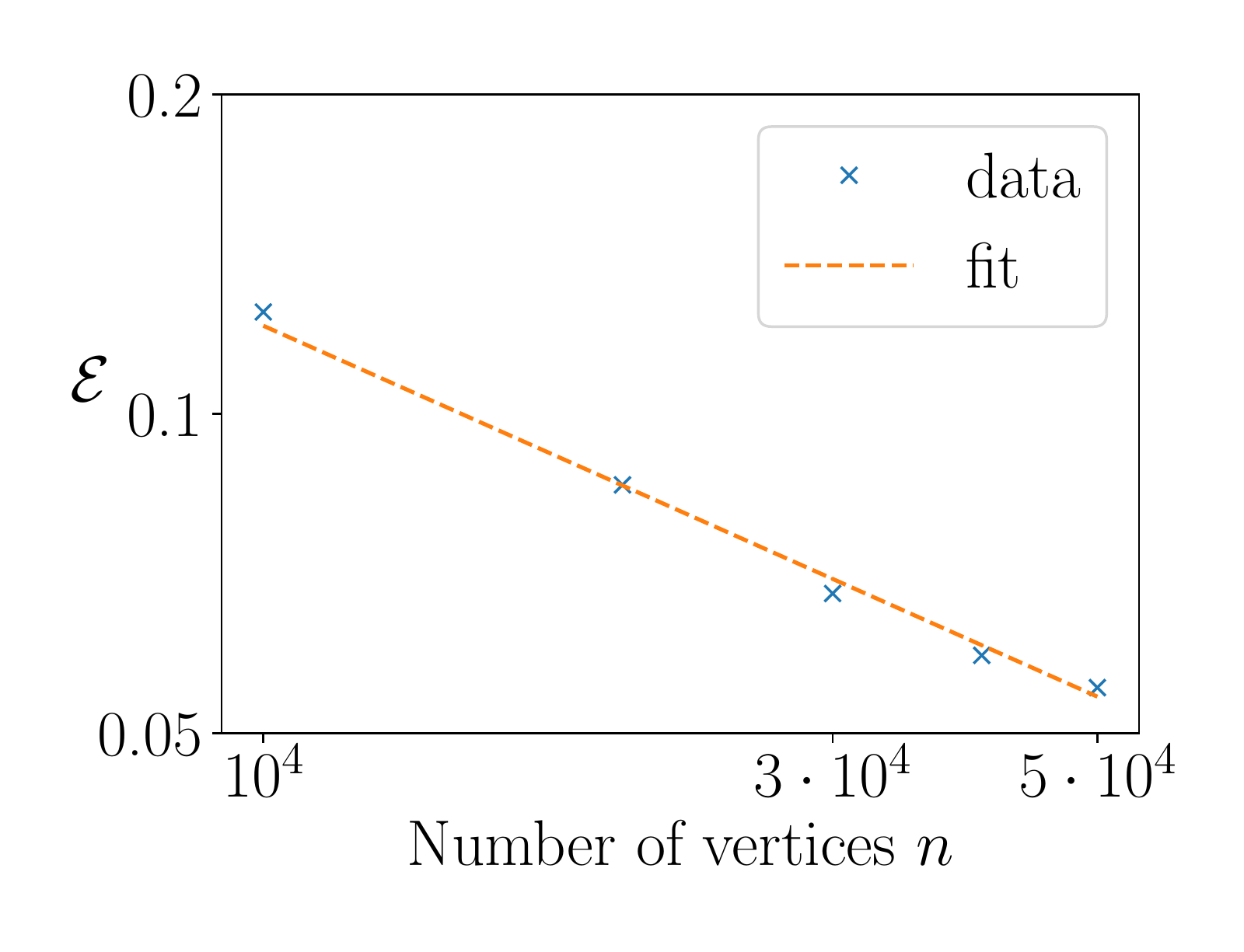}
\caption{\textit{Log-log plot of the average sample deviation $\mathcal{E}$ from \eqref{eq:avg_deviation} as a function of the number of vertices $n$. The fit line shows a one-parameter fit with $\mathcal{E} = c n^{-1/2}$ for the parameter $c$.}} 
\label{fig:nbhd_counting}
\end{figure}

\section{Compression argument proofs}\label{sec:compression_proofs}
We aim to prove Theorem \ref{thm:improvable_main}. We begin with some preliminaries.
\subsection{Preliminary lemmas} \label{sec:nets}
First we review some general ideas about $\epsilon$-nets and $\epsilon$-packings on metric spaces.

Given a set $S$ on a metric space, a subset $S_c \subset S$ is called an \textbf{$\epsilon$-covering} when every point in $S$ is within distance $\epsilon$ of some point in $S_c$. That is, the $\epsilon$-balls around each point in $S_c$ together cover $S$.

A subset $S_p \subset S$ is called an \textbf{$\epsilon$-packing} when all points in $S_p$ are distance at least $\epsilon$ apart. That is, the $\epsilon/2$-balls around each point in $S_p$ are disjoint.

A subset which is both an $\epsilon$-net and an $\epsilon$-packing is called an \textbf{$\epsilon$-net}. We first collect a few useful facts about $\varepsilon$-nets.
\begin{lemma}
Every finite set has at least one $\epsilon$-net, for any $\epsilon>0$.
\end{lemma}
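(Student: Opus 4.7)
The plan is to prove this by a standard maximality/greedy argument. Since $S$ is finite, the collection of all $\epsilon$-packings contained in $S$ is also finite (it is a subset of the power set of $S$), and it is nonempty because the empty set and every singleton $\{s\} \subset S$ are trivially $\epsilon$-packings. I would therefore pick any packing $S_n \subset S$ that is maximal with respect to set inclusion (equivalently, maximal in cardinality).

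The only nontrivial step is to verify that this maximal $\epsilon$-packing is also an $\epsilon$-covering, which will complete the proof since by construction it is already a packing. I would argue by contradiction: suppose some $s \in S$ is not within distance $\epsilon$ of any point of $S_n$, so that $d(s,s') \geq \epsilon$ for every $s' \in S_n$. Then $S_n \cup \{s\}$ still satisfies the packing condition that all pairs are at distance at least $\epsilon$, and it strictly contains $S_n$, contradicting maximality. Hence every point of $S$ lies within distance $\epsilon$ of some point in $S_n$, so $S_n$ is simultaneously an $\epsilon$-packing and an $\epsilon$-covering, i.e., an $\epsilon$-net.

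There is no real obstacle here; the argument is purely combinatorial and uses finiteness only to guarantee that a maximal packing exists. The one stylistic subtlety is making sure the strict-versus-non-strict inequalities in the definitions of ``within distance $\epsilon$'' and ``distance at least $\epsilon$'' line up correctly so that adding $s$ to $S_n$ preserves the packing property; this is immediate from the definitions stated just above the lemma, so the argument goes through essentially as written.
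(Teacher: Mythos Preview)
Your argument is correct and is essentially the same idea as the paper's: the paper phrases it as an explicit greedy construction (repeatedly pick a surviving point, discard everything within $\epsilon$ of it, and stop when nothing remains), while you phrase it as taking a maximal $\epsilon$-packing and observing it must cover. These are two presentations of the same underlying fact that a maximal packing is a cover, with finiteness guaranteeing termination/existence in either case.
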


\begin{proof}
To construct the $\epsilon$-net, choose any point in the set and put it in the net.  Throw away all points within an $\epsilon$-ball of the chosen point. If there are no remaining points then we are done. Otherwise, pick a remaining point, put it into the net, and throw away all the points within an $\epsilon$-ball of the point. Continue this process until the set is exhausted. Since the set is finite, this process must terminate.
\end{proof}

\begin{lemma}
A $2\epsilon$-packing cannot have larger cardinality than an $\epsilon$-covering.
\end{lemma}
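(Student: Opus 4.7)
The plan is to exhibit an injection from the $2\epsilon$-packing into the $\epsilon$-covering; by finiteness (or just set theory) this will give the desired cardinality bound. Let $S_p$ be a $2\epsilon$-packing and $S_c$ an $\epsilon$-covering of the same set $S$. For each $p\in S_p$, the covering property guarantees at least one point $c\in S_c$ with $d(p,c)\le \epsilon$; pick such a point and call it $f(p)$, defining a map $f:S_p\to S_c$.

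The core step is to show $f$ is injective. If $p_1,p_2\in S_p$ are distinct but $f(p_1)=f(p_2)=c$, then the triangle inequality yields
\begin{align}
d(p_1,p_2) \;\le\; d(p_1,c)+d(c,p_2) \;\le\; \epsilon+\epsilon \;=\; 2\epsilon,
\end{align}
contradicting the packing condition that distinct points of $S_p$ lie at distance at least $2\epsilon$ (interpreting the definitions so the boundary case is consistent, e.g., via open $\epsilon$-balls in the covering or strict inequality in the packing). Injectivity of $f$ then immediately implies $|S_p|\le |S_c|$.

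The main ``obstacle'' is really just bookkeeping around the non-strict versus strict conventions in the definitions of covering and packing, since the triangle inequality produces $d(p_1,p_2)\le 2\epsilon$ rather than $<2\epsilon$. This is easily handled by adopting a consistent convention (for instance, treating the covering balls as closed and the packing separation as strict, or equivalently noting that a point at distance exactly $\epsilon$ from two packing points would force $p_1=p_2$ under the open-ball viewpoint). With that fixed, the proof is a two-line pigeonhole argument, which is presumably why the authors state the lemma as a standalone fact and leave the verification short.
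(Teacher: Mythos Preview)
Your proof is correct and essentially identical to the paper's: both define a map $S_p\to S_c$ using the covering property and then use the triangle inequality to show that distinct packing points cannot be sent to the same covering point, yielding the cardinality bound. The only difference is cosmetic---the paper (mis)labels the resulting map a ``surjection'' where you correctly say ``injection,'' and you add a remark about the strict/non-strict boundary convention that the paper glosses over.
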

\begin{proof}
We show, if $S$ has an $\epsilon$-covering $S_c \subset S$ and a $2\epsilon$-packing $S_p \subset S$, then $|S_p| \leq |S_c|$.  To see this, note that all the points in $S_p$ must be within distance $\epsilon$ from some point in $S_c$.  And by triangle inequality, no two distinct points in $S_p$ can be within distance $\epsilon$ of the same point in $S_c$ (otherwise they would be within $2\epsilon$ of each other).  So there's a surjection from $S_p$ to $S_c$, i.e.\ every point in the packing must be covered by a different point in $S_c$.  
\end{proof}

We can show that there is an $\epsilon$-covering of the space of depth-$p$ QAOA unitaries with the following parameters.
\begin{lemma}\label{lem:QAOA_cover}
For any $\delta \in (0,1)$, the set of QAOA unitaries produced by using $2p$ parameters has an $\epsilon$-covering of size $\left(\frac{2\pi}{\delta}\right)^{2p}$ for $\epsilon = p \delta n^2$, using the operator norm distance.
\end{lemma}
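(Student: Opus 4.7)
The plan is to construct an explicit $\epsilon$-covering by discretizing each of the $2p$ angles appearing in $U(\boldsymbol{\gamma},\boldsymbol{\beta})$ on a uniform grid, and then to control the resulting operator-norm error by a telescoping argument together with the standard Lipschitz estimate for matrix exponentials.

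First I would restrict the parameters to the fundamental period $[0,2\pi)$: the operators $e^{-i\gamma C}$ and $e^{-i\beta B}$ are $2\pi$-periodic in their parameters whenever $C$ (e.g.\ $C_Z$) and $B=\sum_i X_i$ have integer spectrum, so one loses no QAOA unitaries by this restriction. I would then lay down a uniform grid of $2\pi/\delta$ points in each of the $2p$ coordinates, giving a candidate covering of cardinality $(2\pi/\delta)^{2p}$, and observe that every parameter vector lies within coordinate-wise distance $\delta/2$ of some grid vector.

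Next I would verify the covering radius by bounding $\|U(\boldsymbol{\gamma},\boldsymbol{\beta}) - U(\tilde{\boldsymbol{\gamma}},\tilde{\boldsymbol{\beta}})\|_\infty$ whenever $(\tilde{\boldsymbol{\gamma}},\tilde{\boldsymbol{\beta}})$ is the nearest grid vector. Writing each unitary as a product of $2p$ factors and telescoping, the triangle inequality together with unitary invariance of the operator norm gives
\begin{align}
\bigl\|U(\boldsymbol{\gamma},\boldsymbol{\beta}) - U(\tilde{\boldsymbol{\gamma}},\tilde{\boldsymbol{\beta}})\bigr\|_\infty \;\le\; \sum_{j=1}^{2p}\bigl\|e^{-i\theta_j H_j} - e^{-i\tilde\theta_j H_j}\bigr\|_\infty,
\end{align}
with each $H_j$ equal to $B$ or $C$. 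The standard estimate $\|e^{-i\theta H} - e^{-i\tilde\theta H}\|_\infty \le |\theta-\tilde\theta|\,\|H\|_\infty$, obtained by integrating $\tfrac{d}{ds}e^{-isH}$ from $\tilde\theta$ to $\theta$, controls each summand. Using $\|B\|_\infty = n$ and $\|C\|_\infty \le \binom{n}{2} \le n^2$ (valid whenever $|C_{\langle i,j\rangle}|\le 1$), together with $|\theta_j - \tilde\theta_j|\le \delta/2$, the total error is at most $2p\cdot(\delta/2)\cdot n^2 = p\delta n^2$, which is the claimed $\epsilon$.

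The main obstacle is mostly bookkeeping: ensuring that the restriction to $[0,2\pi)^{2p}$ really captures all QAOA unitaries relevant to the compression argument, and writing the uniform operator-norm bound on $C$ in the form $\|C\|_\infty \le n^2$ so that the $n^2$ factor in the statement is clean. The analytic content reduces to the telescoping decomposition and the exponential Lipschitz inequality, both of which are elementary, so once those two ingredients are assembled the lemma follows immediately.
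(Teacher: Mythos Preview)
Your proposal is correct and follows essentially the same route as the paper: discretize each of the $2p$ angles on a grid of mesh $\delta$ in $[0,2\pi)$, use the telescoping bound $\|U_1\cdots U_{2p}-V_1\cdots V_{2p}\|\le\sum_j\|U_j-V_j\|$, and control each factor by the Lipschitz estimate $\|e^{-i\theta H}-e^{-i\tilde\theta H}\|\le|\theta-\tilde\theta|\,\|H\|$ with $\|B\|\le n$ and $\|C\|\le n^2$. The only cosmetic differences are that you justify the Lipschitz bound by integrating the derivative while the paper appeals to diagonalization in the computational basis, and you are slightly more explicit about the periodicity assumption needed to restrict to $[0,2\pi)^{2p}$.
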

\begin{proof}
Consider the set $S$ of QAOA unitaries produced by using $2p$ parameters, each in $[0,2\pi]$.  Let $S_0 \subset S$ be the set of QAOA unitaries obtained by using discretized parameters $\{0,\delta,2\delta,\ldots,2\pi\}$.  Then 
\begin{align} \label{eq:S0_size}
    |S_0| = \left(\frac{2\pi}{\delta}\right)^{2p}.
\end{align}
We want to show that any QAOA unitary is close (in operator norm) to one of these unitaries with discretized parameters.  That is, every point in $S$ is close to some point in $S_0$.

Consider the rotation $e^{i \gamma C}$ that occurs in a single layer of the QAOA circuit, for two different angles $\gamma_1, \gamma_2$, for cost operator $C$.  Then in operator norm
\begin{align}
\norm{e^{i \gamma_1 C}-e^{i \gamma_2 C}} \leq |\gamma_1 - \gamma_2| \norm{C},
\end{align}
where $\norm{C}$ is then the maximum cost of any string. This inequality can be shown by working in the computational basis. Likewise, 
\begin{align}
\norm{e^{i \beta_1 B}-e^{i \beta_2 B}} \leq |\beta_1-\beta_2| \norm{B} \leq |\beta_1-\beta_2|n,
\end{align}
where $n$ is the number of qubits.  

Assuming the clauses in the cost-function are normalized and there are no more than $n^2$, we conclude
\begin{align} \label{eq:rotation-error}
    \norm{e^{i \gamma_1 C}-e^{i \gamma_2 C}} &\leq |\gamma_1 - \gamma_2| n^2, \\
    \norm{e^{i \beta_1 B}-e^{i \beta_2 B}} &\leq |\beta_1 - \beta_2| n .
\end{align}
Now note that for a general product of unitaries, $U_1,\cdots,U_m$ and $V_1,\cdots,V_m$, we have the identity
\begin{align} \label{eq:unitary-error-addition}
    \norm{U_1 \ldots U_m - V_1 \ldots V_m} \leq \sum_{i=1}^m \norm{U_i-V_i}.
\end{align}
If desired one can prove this after observing the $m=2$ case: 
\begin{align}
\norm{U_1 U_2 - V_1 V_2}&=\norm{U_1 (U_2-V_2) + U_1 V_2 - V_1 V_2}\\
&\leq \norm{U_1 (U_2-V_2)} + \norm{U_1 V_2 - V_1 V_2}\\ 
&= \norm{(U_2-V_2)} + \norm{(U_1-V_1) V_2 + V_1 V_2 - V_1 V_2}\\ 
&= \norm{(U_2-V_2)} + \norm{(U_1-V_1)}.
\end{align}

Now consider two QAOA unitaries $U, V$ for which each of their $2p$ parameters differ by at most $\delta/2$.   Then combining the general identity \eqref{eq:unitary-error-addition} with \eqref{eq:rotation-error} we obtain
\begin{align}
    \norm{U-V} \leq p \delta n^2.
\end{align}
So for any QAOA unitary $U \in S$, there is a QAOA unitary $V \in S_0$ with discretized parameters such that the parameters each differ by less than $\delta/2$, and then the above holds. That is, $S_0$ forms an $\epsilon$-covering of $S$ for $\epsilon = p \delta n^2$.  
\end{proof}

We will need one more preliminary lemma.
\begin{lemma} \label{lemma:subspace_overlaps}
Take any $\epsilon > 2\gamma > 0.$ Let $\{ U_i \}_i$ be a list of unitaries $U_i$ satisfying $\norm{U_i-U_j} \leq \gamma$.  Let $\{ |w_m\rangle \}_{m=1}^M$ be a list of $M$ orthogonal states.  Let $B$ be an orthogonal projection. Consider projectors $B_i = U_i^\dagger B U_i$.  Assume that for all $\{ |w_m\rangle \}_{m=1}^M$ we have \begin{align}
    \langle w_m | B_j | w_m\rangle > \epsilon
\end{align} 
for at least one $B_j$.  Then
 \begin{align}
      M < \frac{\Tr(B)}{\epsilon-2\gamma}.
 \end{align}
\end{lemma}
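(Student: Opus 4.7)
The key idea is that since all the unitaries $U_i$ are pairwise close, the hypothesis can be transferred from the ``best'' $U_{j(m)}$ for each $w_m$ to a single reference unitary, after which orthogonality and a rank bound finish the argument.

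The plan is to pick any fixed reference $U_1$ from the list and, for each $m$, compare $\|B U_1 |w_m\rangle\|$ with $\|B U_{j(m)} |w_m\rangle\|$, where $j(m)$ is the index guaranteed by the hypothesis. Using that $B$ is a contraction, $\|B(U_1 - U_{j(m)})|w_m\rangle\| \leq \|U_1 - U_{j(m)}\| \leq \gamma$, so the reverse triangle inequality yields
\begin{equation}
\|B U_1 |w_m\rangle\| \;\geq\; \|B U_{j(m)} |w_m\rangle\| - \gamma \;>\; \sqrt{\epsilon} - \gamma,
\end{equation}
and the right-hand side is positive because $B$ is a projector forces $\epsilon < 1$, and the assumption $\epsilon > 2\gamma$ then gives $\sqrt{\epsilon} > \sqrt{2\gamma} > \gamma$ (since $\gamma < 1/2$).

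Next I would square this bound and relate $(\sqrt{\epsilon}-\gamma)^2$ to the desired $\epsilon - 2\gamma$. Expanding, $(\sqrt{\epsilon}-\gamma)^2 = \epsilon - 2\gamma\sqrt{\epsilon} + \gamma^2$. The subtle point — which I expect to be the main technical wrinkle of the proof — is precisely here: we need $-2\gamma\sqrt{\epsilon} + \gamma^2 > -2\gamma$, which holds because $\sqrt{\epsilon} < 1$. So
\begin{equation}
\langle w_m|U_1^\dagger B U_1|w_m\rangle = \|B U_1 |w_m\rangle\|^2 > \epsilon - 2\gamma
\end{equation}
for every $m$.

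Finally I would exploit orthonormality: the states $|\phi_m\rangle := U_1 |w_m\rangle$ are orthonormal because $U_1$ is unitary, so $\Pi := \sum_m |\phi_m\rangle\langle\phi_m|$ is a rank-$M$ projector satisfying $0 \leq \Pi \leq I$. Since both $B$ and $I-\Pi$ are positive semidefinite, $\mathrm{Tr}(B(I-\Pi)) \geq 0$, i.e.\ $\sum_m \langle \phi_m|B|\phi_m\rangle = \mathrm{Tr}(B\Pi) \leq \mathrm{Tr}(B)$. Summing the previous display over $m$ then gives $M(\epsilon - 2\gamma) < \mathrm{Tr}(B)$, which is exactly the claimed bound $M < \mathrm{Tr}(B)/(\epsilon-2\gamma)$.
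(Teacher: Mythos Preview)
Your proof is correct and follows the same overall strategy as the paper: transfer the hypothesis to a single reference unitary, deduce $\langle w_m|B_1|w_m\rangle > \epsilon-2\gamma$ for every $m$, then sum and use $\sum_m \langle \phi_m|B|\phi_m\rangle \leq \Tr(B)$. The only difference is in how the transfer step is carried out: the paper observes directly that $\|B_i-B_j\| = \|U_i^\dagger B U_i - U_j^\dagger B U_j\| \leq 2\gamma$ in operator norm, which immediately gives $\langle w_m|B_1|w_m\rangle \geq \langle w_m|B_{j(m)}|w_m\rangle - 2\gamma > \epsilon - 2\gamma$ without any squaring or appeal to $\epsilon<1$; your route via vector norms and the inequality $(\sqrt{\epsilon}-\gamma)^2 > \epsilon-2\gamma$ reaches the same place but is a bit more roundabout.
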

\begin{proof}
Note that $\norm{B_i-B_j} \leq 2\gamma$. Fix some index $j$. By assumption, for each $|w_m\rangle$ we have 
\begin{align}
\epsilon < \langle w_m | B_i | w_m\rangle
\end{align}
for some index $i$. Then
\begin{align}
\epsilon < \langle w_m | B_i | w_m\rangle = \langle w_m | B_j + (B_i-B_j) | w_m\rangle \leq \langle w_m | B_j  | w_m\rangle + 2\gamma,
\end{align}
and so
\begin{align}
    \langle w_m | B_j | w_m\rangle > \epsilon-2\gamma
\end{align}
holds for all $\{|w_m\rangle\}$. Finally, summing over these states, we get 
\begin{align}
 (\epsilon-2\gamma) M < \sum_{m=1}^M \langle w_m | B_j | w_m\rangle  \leq \Tr(B_j) = \Tr(B) 
\end{align}
and the conclusion follows.
\end{proof}

\subsection{Proof of Theorem
\ref{thm:improvable_main}}
To prove Theorem \ref{thm:improvable_main}, we will focus on the following more basic framing.  Afterward, we will observe how it implies \ref{thm:improvable_main}.
\begin{theorem}\label{thm:improvable}
Consider the depth-$p$ QAOA, and let $M$ be the total number of strings (of any cost) that are improvable to cost $C_1$ with selection probability at least $\Delta$.  Let $d_1$ be the total number of strings with cost at least $C_1$.  Then
\begin{align}
    M \leq \frac{2d_1}{\Delta}\left(\frac{16 \pi p n^2}{\Delta}\right)^{2p}.
\end{align}
\end{theorem}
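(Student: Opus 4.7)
The plan is to combine the $\gamma$-cover of depth-$p$ QAOA unitaries from Lemma~\ref{lem:QAOA_cover} with the orthogonality--overlap bound of Lemma~\ref{lemma:subspace_overlaps}, applied one cluster of the cover at a time. Let $P$ be the orthogonal projector onto computational-basis states of cost at least $C_1$, so $\Tr(P) = d_1$. By Definition~\ref{def:improvable}, for each of the $M$ improvable strings $w_m$ there is a depth-$p$ QAOA unitary $U_{w_m}$ satisfying $\langle w_m | U_{w_m}^\dagger P U_{w_m} | w_m\rangle \geq \Delta$.

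First I would fix a discretization parameter $\delta \in (0,1)$ and apply Lemma~\ref{lem:QAOA_cover} to obtain a covering $\{V_i\}_{i=1}^N$ of the QAOA unitaries in operator norm with radius $\eta = p \delta n^2$ and cardinality $N = (2\pi/\delta)^{2p}$. For each improvable $w_m$, choose any $\pi(m)$ with $\norm{U_{w_m} - V_{\pi(m)}} \leq \eta$, and let $S_i = \{w_m : \pi(m) = i\}$, so that $M = \sum_i |S_i|$. Within a single cluster $S_i$, the triangle inequality gives $\norm{U_{w_m} - U_{w_{m'}}} \leq 2\eta$ for all $w_m, w_{m'} \in S_i$, which is exactly the pairwise-closeness hypothesis needed by Lemma~\ref{lemma:subspace_overlaps}.

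Next, for a fixed $i$, I would apply Lemma~\ref{lemma:subspace_overlaps} with unitaries $\{U_{w_m}\}_{w_m \in S_i}$, the mutually orthogonal computational-basis states $\{|w_m\rangle\}_{w_m \in S_i}$, and projector $B = P$. The lemma's hypothesis is satisfied by the diagonal witnesses $B_m = U_{w_m}^\dagger P U_{w_m}$, which give $\langle w_m | B_m | w_m \rangle \geq \Delta$ for every $w_m \in S_i$. Setting the lemma's $\epsilon$-parameter to (just below) $\Delta$ and its $\gamma$-parameter to $2\eta$ yields $|S_i| \leq d_1/(\Delta - 4\eta)$. Summing over $i$ produces
\begin{align}
    M \;\leq\; \left(\frac{2\pi}{\delta}\right)^{2p} \cdot \frac{d_1}{\Delta - 4 p \delta n^2},
\end{align}
and the final step is to choose $\delta = \Delta/(8 p n^2)$ so that $\Delta - 4\eta = \Delta/2$, which gives exactly $M \leq (2 d_1/\Delta)(16 \pi p n^2/\Delta)^{2p}$.

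The main thing to be careful about is constant tracking: the cover radius $\eta$ between $U_{w_m}$ and its surrogate doubles to $2\eta$ when bounding within-cluster pairwise distances, and then enters as $4\eta$ in the denominator through the $2\gamma$ term of Lemma~\ref{lemma:subspace_overlaps}; an off-by-two error at any of these stages would shift the final $16\pi$ constant. Once this bookkeeping is done correctly, no additional estimates are needed beyond the two preliminary lemmas already proved.
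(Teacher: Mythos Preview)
Your proposal is correct and follows essentially the same route as the paper: cover the space of depth-$p$ QAOA unitaries using Lemma~\ref{lem:QAOA_cover}, then apply Lemma~\ref{lemma:subspace_overlaps} within each cluster and sum. The only organizational difference is that the paper first builds a $\gamma$-net $X_1$ of the image $\{U_w\}$ (with $\gamma=\Delta/4$) and then bounds $|X_1|$ by the size of the $\Delta/8$-cover $S_0$ via the packing/covering comparison lemma, whereas you cluster directly around the cover points $V_i$ and skip that intermediate step; your bookkeeping with $\eta$, $2\eta$, and $4\eta$ is correct and lands on the same constants.
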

\begin{proof}
Let $B$ be the orthogonal projection onto the ``best'' strings, specified by all those with cost at least $C_1$.  Let $d_1 = \Tr(B)$ be the number of such strings.  Let $W$  be the set of strings $w$ with the property that there exists some QAOA unitary $U_w$ (depending on $w$) of depth $p$ such that 
\begin{align}
    \langle w |U_w^\dagger B U_w|w\rangle > \Delta.\label{eq:overlap}
\end{align}
Then Theorem \ref{thm:improvable} is precisely the statement that
\begin{align}
    |W| \leq  2\frac{ d_1}{\Delta} |S_0| \leq  2\frac{ d_1}{\Delta} \left(\frac{16 \pi p n^2}{\Delta}\right)^{2p}.
\end{align}
We now argue this inequality. For each $w \in W$, we can assign some $U_w$ such that~\eqref{eq:overlap} holds. Denote the assignment
\begin{align}
    f : & W \to X, \\
     & w \mapsto U_w.
\end{align}
where $X= \{U_w : w \in W\}= f(W)$.  Note that $f$ may be a many-to-one function.

Let $\gamma = \Delta/4.$ Choose a subset $X_1 \subset X$ that is a $\gamma$-net, i.e.\ a $\gamma$-packing and $\gamma$-cover. (See Section \ref{sec:nets}.) That is, the points in $X_1$ are at least $\gamma$ apart, and every point in $X$ is within $\gamma$ of $X_1$. For any $U \in X_1$, consider the $\gamma$-ball $\text{Ball}_\gamma(U) \subset X_1$. Consider the strings $f^{-1}(\text{Ball}_\gamma(U)) \subset W$.  Applying Lemma \ref{lemma:subspace_overlaps} to the unitaries  $\text{Ball}_\gamma(U)$ and the strings $f^{-1}(\text{Ball}_\gamma(U))$, we conclude 
\begin{align}
    |f^{-1}(\text{Ball}_\gamma(U))| \leq \frac{\Tr(B)}{\Delta-2\gamma} = 2\frac{ d_1}{\Delta}.
\end{align}
We apply the above for every $U \in X_1$.  Because $X = \cup_{U \in X_1} \text{Ball}_\gamma(U)$, we have
\begin{align}
    |W| = \left|f^{-1}(X)\right| \leq \left| \bigcup_{U \in X_1} f^{-1}(\text{Ball}_\gamma(U))\right| \leq 2\frac{ d_1}{\Delta} |X_1|.
\end{align}

From Lemma \ref{lem:QAOA_cover} we see that the space $S$ of QAOA unitaries with $2p$ parameters has an $\epsilon$-covering $S_0 \subset S$ for $\epsilon=p \delta n^2$, with 
\begin{align}
    |S_0| = \left(\frac{2\pi}{\delta}\right).
\end{align}
We choose $\delta = \frac{\Delta}{8pn^2}$ so that $S_0$ is an $\epsilon$-covering of $S$ for $\epsilon = \frac{\Delta}{8}$.  Then $X_1$, which is a $(\frac{\Delta}{4})$-packing, can be no larger than the cover $S_0$.  So
\begin{align}
    |W| \leq  2\frac{ d_1}{\Delta} |S_0| \leq  2\frac{ d_1}{\Delta} \left(\frac{16 \pi p n^2}{\Delta}\right)^{2p},
\end{align}
as desired.
\end{proof}
For convenience we will restate\footnote{In contrast to the version stated in the main text, we consider an initial string chosen randomly from within some finite window of costs, rather than at a single fixed cost.  Using a window accommodates cost functions where all strings may have distinct costs.} Theorem \ref{thm:improvable_main} and then see how it follows from the above result, Theorem \ref{thm:improvable}.  

\begin{theorem}
Consider warm-start QAOA at depth $p$, starting with a string of cost within $[C_0, C_0']$ chosen uniformly at random. Let $d_0$ be the number of strings with cost within $[C_0, C_0']$, and let $d_1$ be the number of strings with cost at least $C_1$.  Then with probability at least $1-\epsilon$, the initial string will not be improvable to cost $C_1$ using selection probability at least $\epsilon$, where
\begin{align}
    \epsilon = \left(\frac{2d_1}{d_0}\right)^{\frac{1}{2p+2}}(16 \pi p n^2)^{\frac{p}{p+1}}.
\end{align}
\end{theorem}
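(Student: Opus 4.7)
The plan is to derive this statement as a direct corollary of Theorem~\ref{thm:improvable}, translating the absolute upper bound on the number of improvable strings into a statement about the probability that a uniformly random initial string falls into the improvable set. The key idea is to set the selection-probability threshold $\Delta$ in Theorem~\ref{thm:improvable} equal to the failure probability $\epsilon$ we wish to establish, and then choose $\epsilon$ so that the resulting bound on $M$ is at most $\epsilon\cdot d_0$.

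Concretely, first I would apply Theorem~\ref{thm:improvable} with $\Delta = \epsilon$ to obtain
\begin{align}
    M(\epsilon) \;\leq\; \frac{2d_1}{\epsilon}\left(\frac{16\pi p n^2}{\epsilon}\right)^{2p},
\end{align}
where $M(\epsilon)$ denotes the total number of strings in $\{-1,1\}^n$ that are $p$-improvable to cost $C_1$ with selection probability at least $\epsilon$. Next, I would observe that if $w$ is drawn uniformly at random from the $d_0$ strings with cost in $[C_0, C_0']$, then the number of improvable strings lying in this cost window is at most $M(\epsilon)$, so the probability that $w$ is $p$-improvable is bounded by $M(\epsilon)/d_0$.

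The final step is algebraic: I would impose $M(\epsilon)/d_0 \le \epsilon$, which upon substituting the bound on $M(\epsilon)$ becomes
\begin{align}
    \frac{2d_1}{d_0\,\epsilon^{2p+1}}(16\pi p n^2)^{2p} \;\leq\; \epsilon,
\end{align}
or equivalently $\epsilon^{2p+2} \geq \tfrac{2d_1}{d_0}(16\pi p n^2)^{2p}$. Taking the $(2p+2)$-th root and simplifying the exponent $2p/(2p+2) = p/(p+1)$ yields exactly the $\epsilon$ in the statement. By construction, this value of $\epsilon$ is precisely the threshold at which the probability of drawing an improvable $w$ is at most $\epsilon$, proving the theorem.

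There is essentially no obstacle in this step beyond careful bookkeeping of exponents: all the real content lies in Theorem~\ref{thm:improvable}, whose proof uses the $\epsilon$-net / $\epsilon$-packing machinery of Lemmas in Appendix~\ref{sec:nets} to control the covering number of the $2p$-parameter QAOA manifold. The only subtlety worth flagging is the distinction between the window-based version stated here (cost in $[C_0,C_0']$) and the single-cost version in the main text, noted in the footnote; the argument is identical because only the count $d_0$ of candidate initial strings enters, not the detailed distribution of their costs.
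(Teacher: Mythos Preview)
Your proposal is correct and follows essentially the same route as the paper: apply Theorem~\ref{thm:improvable} with $\Delta=\epsilon$, bound the probability of drawing an improvable string by $M/d_0$, and solve $M/d_0\le\epsilon$ for $\epsilon$. The paper's only cosmetic difference is that it packages $\tfrac{2d_1}{d_0}(16\pi p n^2)^{2p}$ into an auxiliary constant $A$ before taking the $(2p+2)$-th root.
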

\begin{proof}
For convenience let 
\begin{align}
    A = \frac{2d_1}{d_0} (16 \pi p n^2)^{2p}.
\end{align}  Then by the previous theorem, 
\begin{align}
    M \leq d_0 A \Delta^{-(2p+1)}.
\end{align}  
where $M$ is the number of strings improvable to cost $C_1$, with selection probability at least $\Delta$, and within depth at most $p$.
Take 
\begin{align}
    \Delta = A^{\frac{1}{2p+2}}.
\end{align}  Then 
\begin{align}
    \frac{M}{d_0} \leq A \Delta^{-\frac{2p+1}{2p+2}} = A^{\frac{1}{2p+2}} = \Delta.
\end{align}
Note the left-hand side is precisely the fraction of strings within cost $[C_0, C'_0]$ that are improvable in the desired sense.  Therefore we can define $\epsilon = \Delta$ and the corollary holds.
\end{proof}

\end{appendices}

\end{document}